\documentclass[reqno]{article}
\usepackage[a4paper, margin=1in]{geometry}
\usepackage{graphicx}
\usepackage{dcolumn} 
\usepackage{relsize}
\usepackage{authblk}
\usepackage{anyfontsize}
\usepackage{xfrac}
\usepackage{bm}
\usepackage[utf8]{inputenc}
\usepackage{amsmath}
\usepackage{amssymb}
\usepackage{amsthm}
\usepackage{mathtools}
\usepackage{hyperref}
\hypersetup{colorlinks,allcolors=blue}
\usepackage{physics}
\usepackage{mymacros}
\usepackage{url}
\usepackage{amsfonts}
\usepackage{tikz,pgf, tikz-3dplot}
\usepackage{tabularx,stackengine}
\usepackage{soul}
\newtheorem{conjecture}{Conjecture}
\newtheorem{theorem}{Theorem}[section]
\newtheorem{lemma}[theorem]{Lemma}
\newtheorem{proposition}[theorem]{Proposition}
\newtheorem{definition}[theorem]{Definition}

\usepackage[backend=biber, style=phys]{biblatex}
\DeclareFieldFormat{labelnumberwidth}{\mkbibbrackets{#1} }

\usetikzlibrary{decorations.markings,decorations.shapes,decorations.pathmorphing,svg.path, 3d}
\usetikzlibrary{chains}
\renewcommand{\selectlanguage}[1]{}

\usepackage{tikz}
\usetikzlibrary{decorations.pathmorphing, positioning, shapes.geometric}

\usepackage{tikz}
\usetikzlibrary{decorations.pathmorphing, positioning}

\definecolor{gemLightGray}{HTML}{DCDCDC} 
\definecolor{deepCharcoal}{HTML}{2D2D2D} 
\definecolor{c0}{HTML}{D35400}  
\definecolor{c1}{HTML}{1B5E20}  
\definecolor{c2}{HTML}{1A237E}   
\definecolor{c3}{HTML}{7B241C}      
\definecolor{c4}{HTML}{00695C}     

\tikzset{
  manifold gem/.style={
    v1/.style={
        circle, 
        fill=gemLightGray, 
        draw=deepCharcoal, 
        thick, 
        minimum size=2mm,
        inner sep=0pt
    },
    v2/.style={
        circle, 
        fill=deepCharcoal, 
        draw=deepCharcoal, 
        thick, 
        minimum size=2mm, 
        inner sep=0pt,
        text=white 
    },
    e0/.style={c0, thick},
    e1/.style={c1, thick, dashed},
    e2/.style={c2, thick, dotted},
    e3/.style={c3, thin, double, double distance=.3pt},
    e4/.style={
        c4, 
        semithick,
        decorate, 
        decoration={snake, amplitude=0.2mm, segment length=2.5mm}
    },
    gemEdge/.style={line cap=round, line join=round}
  }
}

\newcommand{\fei}{{\rm F8}_}
\newcommand{\str}{{\rm star}}
\newcommand{\Z}{\mathbb{Z}}
\newcommand{\R}{\mathbb{R}}

\newcommand{\GG}{\mathcal{G}}

\definecolor{darkred}{HTML}{D40000}
\definecolor{lightblue}{HTML}{0066FF}
\definecolor{darkblue}{HTML}{003399}

\begin{document}
\title{Universal entanglement probes of topological order and locally-achiral manifolds}

\author{Yarden Sheffer}
\affil{Department of Condensed Matter Physics, Weizmann Institute of Science, Rehovot 7610001, Israel}

\date{\today}
\maketitle
\begin{abstract}
  We consider the problem of identifying a topological order based on bulk entanglement of the ground-state wavefunction. Previous work showed that some universal information can be extracted from multi-entropy measures, a class of multipartite entanglement measures obtained by applying permutation operators exchanging the degrees of freedom between different replicas of the wavefunction. It remains an open question to what extent such entanglement measures can be used to extract any universal information from the ground state. Here we show that the topological partition function $Z(M)$ of a manifold $M$ can be extracted provided that $M$ satisfies a topological condition which we term ``local achirality". We show that locally-achiral manifolds can be used to extract universal properties of 2+1d topological phases that go beyond the $S$ and $T$ matrices. As a first step towards classifying locally-achiral manifolds, we show that, in four dimensions, such manifolds have vanishing Pontryagin number. We relate this property to the existence of beyond-cohomology time-reversal symmetry protected topological order (T-SPT) in four dimensions. Finally, we present an entanglement measure that detects this nontrivial T-SPT.
\end{abstract}
\tableofcontents
\section{Introduction}
Topologically ordered (TO) states are gapped quantum phases of matter which cannot be identified with a trivial gapped phase. In contrast to phases of matter arising in Landau's paradigm, such phases can arise without breaking any (global) symmetry. In the absence of such symmetry breaking, TO ground states are distinguished by different patterns of entanglement. It has long been known that the entanglement entropy can be used as a probe of topological order in two dimensions, via an order one correction to the area law termed the topological entanglement entropy (TEE) \cite{kitaev_2006_topological,Levin_Wen_2006}. Later work extended the TEE to higher dimensions \cite{grover2011entanglement}. It is also known, however, that the TEE cannot capture the full complexity of possible TO states, and that distinct phases can have the same TEE. As a concrete example, the toric-code and the Ising phases of the Kitaev Honeycomb model \cite{kitaev2006anyons} have the same TEE but different anyon statistics. Recent years have seen a surge in proposed multipartite entanglement measures that can probe properties of a TO phase beyond the TEE. Such measures could detect, for example, the presence of gapless edge modes \cite{siva_universal_2022, liu_multi_2024}, the chiral central charge \cite{kim_chiral_2022,kim2022modular, sheffer2025probing, gass2025r}, the anyon quantum dimensions \cite{Liu_2024}, and the topological spins \cite{sheffer2025extracting}. 

Many of the recent results were obtained via topological multi-entropy measures. Originally considered in the context of the AdS/CFT correspondence \cite{Gadde_Krishna_Sharma_2022, penington_fun_2023}, such measures are multipartite generalizations of the R\'enyi entropy. Multi-entropy measures are defined by considering multiple copies (replicas) of a wavefunction $\ket\y$, and the expectation value of the operator applying permutations of the local degrees of freedom between the replicas, applying different permutations on different subsystems (when $\ket\y$ is defined on a finite-dimensional lattice, the subsystems are different spatial regions). The power of multi-entropy measures is that they allow for different universal invariants to be obtained by varying the choice of permutation operators. Moreover, in the context of topological phases, the resulting universal quantities admit an interpretation in terms of the topological partition function $Z(M)$ evaluated on a certain manifold $M$ (which depends on the choice of permutations, as we explain below) \cite{dong_topological_2008, sheffer2025extracting, del2026multipartite}. 

The aim of this work is to consider a basic question about multipartite entanglement measures of TO: can such measures be used to completely determine a topological state? The construction of \cite{del2026multipartite} shows that, for any manifold $M$, the magnitude $\abs{Z(M)}$ can be extracted universally via multi-entropy measures. It is less clear, however, that the argument $\arg(Z(M))$ can be obtained universally as well. This is required for a full characterization of the phase, as, for example, a phase and its time-reversed partner will have the same magnitude for any partition function. Similarly, in the context of invertible phases (such as symmetry-protected topological orders (SPTs)), the partition function is of magnitude 1, and the entire information is carried by the argument. The central contribution of our work here is to define a class of manifolds, which we term ``locally-achiral" manifolds, for which $\arg(Z(M))$ can be extracted using multi-entropy measures. This condition follows from the idea that non-universal terms in the multi entropy come from local contributions. The local achirality condition ensures that these local contributions cannot contribute to the phase. While it remains unclear which manifolds are locally-achiral (it is possible, for example, that all 3-manifolds are), we present a construction that results in a large number of locally-achiral 3-manifolds. This shows that locally-achiral manifolds are quite ubiquitous.

In the problem of distinguishing between topological states, a particular case of interest is that of the Mignard-Schauenburg (MS) examples \cite{Mignard_Schauenburg_2021}. These are 3-dimensional topological theories which are distinct yet cannot be distinguished using their modular invariants (the $S$ and $T$ matrices). To our knowledge, these are the only known examples which cannot be distinguished using the set of entanglement measures presented in \cite{sheffer2025extracting}. We present locally-achiral manifolds whose partition function goes beyond the modular invariants, and show that their partition function can distinguish some (likely infinitely many) of the MS examples. We therefore conjecture that any two topological theories in 2+1D can be distinguished by the partition function on locally-achiral manifolds, so that there always exists a multi-entropy measure that distinguishes between them. Beyond this example, it is interesting to ask whether the notion of local achirality applies to any manifold. In higher dimensions, when the dimension is $0\mod 4$, we argue that the Pontryagin numbers are obstructions to a manifold being locally-achiral. This is related to the existence of time-reversal symmetry protected topological orders whose partition function is given by the Pontryagin number mod 2.

\section{Preliminaries}
\subsection{From multi-entropy measures to topological manifolds}
\label{sec: multi-ent}
\begin{figure}
  \begin{center}
  \begin{tikzpicture}
    \node at (0,0){
      \begin{tikzpicture}[manifold gem]
          \def\r{1}
          \def\d{.1}
          \draw[thick] (1,0) arc (0:360:\r);
          \draw (10:\r-\d) --++ (10:2*\d);
          \draw (90:\r-\d) --++ (90:2*\d);
          \draw (180-10:\r-\d) --++ (180-10:2*\d);
          \node[c1] at (40:1.3) {\large $A$};
          \node[c2] at (140:1.3) {\large $B$};
          \node[c0] at (-90:1.2) {\large $\Lambda$};
        \end{tikzpicture} };
    \node at (3.8,.2){
      \begin{tikzpicture}[scale=1.1,manifold gem]
          \node at (0,0) {\includegraphics[scale=.8]{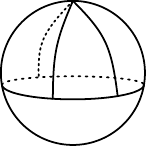}};
          \node[c0] at (0,-.7) {\large $\Lambda$};
          \node[c1] at (0.2,.2) {\large $A$};
          \node[c2] at (1,.7) {\large $B$};
          \node[c3] at (-1,.7) {\large $C$};
        \end{tikzpicture} };
    \node at (8,.2){
      \begin{tikzpicture}[manifold gem]
          \node at (0,0) {\includegraphics[scale=.8]{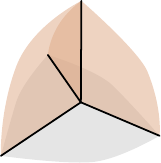}};
          \node[c0] at (1.7,-.9) {\large $\Lambda$};
          \node[c1] at (0.2,.3) {\large $A$};
          \node[c2] at (.6,.9) {\large $B$};
          \node[c3] at (-.6,.9) {\large $C$};
          \node[c4] at (0,-.7) {\large $D$};
        \end{tikzpicture} };
  \end{tikzpicture}
  \end{center}
  \caption{Partition of the 1,2 and 3-sphere into regions. The 3-sphere is considered as $\mathbb{R}^3$ compactified at infinity.}\label{fig: regions}
\end{figure}
\begin{figure}
  \begin{center}
    \begin{tikzpicture}
      \begin{scope}[manifold gem]
        \node[v1] (1) at (-1.5,1.0) {};
        \node[v2] (2) at (1.5,1.0) {};
        \draw[e0] (1) to[bend left=10] (2);
        \draw[e1] (1) to[bend left=120,looseness=1.5] (2);
        \draw[e2] (1) to[bend right=140,looseness=1.8] (2);
      \end{scope}
      \foreach \x in {-1,1}{
        \begin{scope}[shift={(\x,0)}, xscale=\x,every path/.append style={thick}]
        \coordinate (a) at (0,0);
        \coordinate (b) at (1.4,1);
        \coordinate (c) at (0,2);
        \coordinate (d) at (.5,1.2);
        
        \draw (a) -- (b) -- (c) -- (a);
        \draw[dashed] (a) -- (d) -- (c);
        \draw[dashed] (b) -- (d);
      \end{scope}
      \begin{scope}[manifold gem]
      \draw[e3] (1) to[bend right=20] (2);
    \end{scope}
    }
    \end{tikzpicture}
  \end{center}
  \caption{The ``trivial" multi entropy $\ev{\y|\y}$ defines the manifold $S^D$ by gluing two $D$-simplexes, and the appropriate gem, here shown for $D=3$.}\label{fig: multi-ent-example}
\end{figure}

We begin by setting up the problem. We consider a wavefunction $\ket\psi$ defined on a lattice approximating the $d$-dimensional sphere and assume that $\ket\psi$ is a gapped ground state of a local Hamiltonian. We are interested in characterizing the topological phase which $\psi$ belongs to. One framework of characterizing the universal properties of the phase is via multi-entropy measures, defined as follows: divide the sphere into $d+2$ regions $A_i$, such that any subset of $d+1$ regions intersects at a vertex (see Fig. \ref{fig: regions}). Consider $R$ replicas of the wavefunction and $d+2$ permutations of $R$ elements $\pi_{A_1},...,\pi_{A_{d+2}}$. The multi-entropy is then defined as
\begin{equation}
  \mathcal{M}\qty(\y)=\mel{\psi^{\otimes R}}{\pi_{A_1}\cdots\pi_{A_{d+2}}}{\y^{\otimes R}},
  \label{eq: multi-ent-def}
\end{equation}
where now the operator $\pi_{A_i}$ permutes the local degrees of freedom within region $A_i$ according to the permutation element \footnote{The choice of $\ket\y$ defined on the sphere is made for ease of presentation. Due to locality, the same results will apply with any other topology, with $A_1,...,A_{d+1}$ forming a disk-shaped region, and $A_{d+2}$ being the rest of space.}. Any multi-entropy is invariant under local unitaries acting inside the regions $A_i$, and as such defines a measure of entanglement between the regions. If $\ket{\psi}$ has a description in terms of a TQFT, the multi-entropy is related to the partition function of the TQFT evaluated on a certain manifold $M$ \cite{dong_topological_2008}. This manifold is obtained from the permutation elements using the following procedure: take $R$ ``ket" and $R$ ``bra" simplexes of dimension $D=d+1$. Label the faces of the bra and ket simplexes by the regions $A_i$, with opposite orientation between the bra and ket. The manifold $M$ is then obtained by gluing each $A_i$ face of the ket simplex $r$ with the $A_i$ face of the bra simplex $\pi_{A_i}(r)$ (see an example in Fig. \ref{fig: multi-ent-example}). For ``generic" wavefunctions, it is conjectured that the following relation holds for the multi-entropy
\begin{equation}
  \mathcal{M}(\y)=Z(M,\psi)\times\text{local terms}
  \label{eq: multi-ent-tqft}
\end{equation}
where $Z(M,\psi)$ is the partition function of the TQFT associated with $\y$ on the $D$-manifold $M$ \footnote{Some care has to be taken for states with nonzero thermal Hall conductance, see \cite{sheffer2025probing}. We will ignore this case throughout this work.}. The local terms are obtained from contributions from $\psi$ that come from ($d+1$)-partite entanglement, or, geometrically, from lower-dimensional contributions. To state this conjecture slightly more precisely, we consider a ``multi-entropy probe" $\mathcal{F}(\y)$ to be a function of different multi-entropy measures defined on the $d+2$ regions, such that $\mathcal{F}$ is multiplicative under stacking. An example of such a probe, which will be used extensively throughout this work, is the phase contribution
\begin{equation}
  \mathcal{P}(\y) = \frac{\mathcal{M}(\y)}{\abs{\MM(\y)}}.
  \label{eq: phase-probe}
\end{equation}

\begin{conjecture}[Validity of multi-entropy probes]
  Assume that $\mathcal{F}$ is a multi-entropy probe in $d$ dimensions, such that $\mathcal{F}(\y)=1$ for states supported on any $d+1$ of the regions. Then, for regions of linear size $L$, a generic (quasi-)local quantum circuit $U$ of depth $t$, and generic gapped ground state of a local Hamiltonian $\y$, we have $\mathcal{F}(\y)\approx\mathcal{F}(U\y)$ with errors small as $L/t\to \infty$. Furthermore, $\mathcal{F}$ can be calculated by replacing any measure $\mathcal{M}$ in this expression with the appropriate TQFT partition function $Z(M)$.
\end{conjecture}

The condition of triviality on any $d+1$ regions is required, otherwise the conclusion can be trivially violated by stacking with a state supported only around the corner between those regions. Such probes should be thought of as probes that systematically cancel the local contributions in \eqref{eq: multi-ent-tqft}. This conjecture is supported by field-theory arguments \cite{dong_topological_2008, sheffer2025probing} and lattice calculations \cite{del2026multipartite}. If the ``generic" condition is ignored, the claim is known to be incorrect due to spurious contributions \cite{williamson2016spurious, Zou_Haah_2016, Gass_Levin_2024} in certain fine-tuned models. Here, generic should be interpreted as the space of violating $\ket\y$ and $U$ having small measure as $L$ becomes large (in the space of gapped local Hamiltonians and low-depth circuits, respectively). Here we will not attempt to argue for this conjecture, but will rather take it as an axiom. The contributions of the local terms will, however, impose some nontrivial constraints which will be dealt with extensively below. Taking $\mathcal{P}$ as an example, we expect that if $\arg(\MM)$ is trivial on any $d+1$ partite state, then $\arg(\MM)=\arg(Z(M))$.

As was pointed out in \cite{del2026multipartite}, the relation between the permutation operators and the manifold $M$ is related to the notion of a ``gem" of a manifold. We give a few definitions:
\begin{definition}[$n$-graph]
  An \textup{$n$-graph} is a bipartite graph $\GG$ where each vertex is attached to $n$ edges, and the edges colored by $n$ different colors such that the edges attached to each vertex have distinct colors. Using bipartiteness, the vertices acquire a natural 2-coloring.
\end{definition}
The construction above shows that any multi-entropy defines a ($D+1$)-graph with $2R$ vertices. While any such graph defines a $D$-dimensional triangulation, this triangulation will not necessarily define a topological manifold. To obtain a manifold, we need to impose the additional requirement that the local region of each vertex looks like $\mathbb{R}^D$. Define the $i$-residue $\GG_i$ to be the graph obtained from $\GG$ by removing the edges of color $i$. It is a $D$-graph (not necessarily connected) and therefore defines a ($D-1$)-dimensional triangulation. It turns out that $M$ will be a manifold if and only if each connected component of $\GG_i$ is a triangulation of $S^{D-1}$. We therefore define:
\begin{definition}[gem]
  \label{def: gem}
  A \textup{``graph-encoded $n$-manifold"} or an \textup{$n$-gem} is an ($n+1$)-graph $\GG$ such that each connected component of each residue $\GG_i$ is a triangulation of $S^{n-1}$.
\end{definition}
Any gem will therefore define a manifold. Any triangulated manifold (and, therefore, any manifold of dimension $D\le 7$) can be defined in terms of a gem. We give a few examples of $3$-gems in Fig. \ref{fig: gems-examples}.
\begin{figure}
  \begin{center}
    \begin{tikzpicture}
      \node at (0,0) {
    \begin{tikzpicture}[manifold gem]
      \node[v1] (a) at (0,0) {};
      \node[v2] (b) at (0,2) {};

      \draw[e0] (a) to[bend left=45] (b);
      \draw[e1] (a) to[bend left=15] (b);
      \draw[e2] (a) to[bend left=-15] (b);
      \draw[e3] (a) to[bend left=-45] (b);
  \end{tikzpicture} };
  \node at (4,0) {
    \begin{tikzpicture}[scale=.8,manifold gem]
      \def\r{.6}
      \def\rr{1.4}
      \node[v1] (11) at (\r,\r) {};
      \node[v2] (21) at (\rr,\rr) {};
      \node[v2] (12) at (-\r,\r) {};
      \node[v1] (22) at (-\rr,\rr) {};
      \node[v1] (13) at (-\r,-\r) {};
      \node[v2] (23) at (-\rr,-\rr) {};
      \node[v2] (14) at (\r,-\r) {};
      \node[v1] (24) at (\rr,-\rr) {};

      \draw[e0] (11) to (12);
      \draw[e0] (13) to (14);
      \draw[e0] (21) to (22);
      \draw[e0] (23) to (24);
      \draw[e1] (11) to (21);
      \draw[e1] (12) to (22);
      \draw[e1] (13) to (23);
      \draw[e1] (14) to (24);
      \draw[e2] (11) to (14);
      \draw[e2] (13) to (12);
      \draw[e2] (21) to (24);
      \draw[e2] (23) to (22);
      \draw[e3] (11) to[out=155,in=75, looseness=1.8] (23);
      \draw[e3] (12) to[out=155+90,in=75+90, looseness=1.8] (24);
      \draw[e3] (13) to[out=155+2*90,in=75+2*90, looseness=1.8] (21);
      \draw[e3] (14) to[out=155+3*90,in=75+3*90, looseness=1.8] (22);
  \end{tikzpicture}
};
  \node at (8,0) {
    \begin{tikzpicture}[scale=.8,manifold gem]
      \def\r{.7}
      \def\rr{1.7}
      \foreach \q in {1,3,...,6}
        \node[v1] (1-\q) at (60*\q:\r) {};
      \foreach \q in {2,4,...,6}
        \node[v2] (1-\q) at (60*\q:\r) {};
      \foreach \q in {1,3,...,6}
        \node[v2] (2-\q) at (60*\q:\rr) {};
      \foreach \q in {2,4,...,6}
        \node[v1] (2-\q) at (60*\q:\rr) {};
      \foreach \i in {1,2}{
        \draw[e2] (\i-1) -- (\i-2);
        \draw[e2] (\i-3) -- (\i-4);
        \draw[e2] (\i-5) -- (\i-6);
        \draw[e0] (\i-1) -- (\i-6);
        \draw[e0] (\i-3) -- (\i-2);
        \draw[e0] (\i-5) -- (\i-4);
      }
      \draw[e3] (1-1) to[bend right=40] (2-3);
      \draw[e3] (1-2) to[bend right=40] (2-4);
      \draw[e3] (1-3) to[bend right=40] (2-5);
      \draw[e3] (1-4) to[bend right=40] (2-6);
      \draw[e3] (1-5) to[bend right=40] (2-1);
      \draw[e3] (1-6) to[bend right=40] (2-2);
      \foreach \q in {1,...,6}
        \draw[e1] (1-\q) to (2-\q);

  \end{tikzpicture}
};
\node at (0,-1.6) {(a)};
\node at (4,-1.6) {(b)};
\node at (8,-1.6) {(c)};
  \end{tikzpicture}
  \caption{Example gems for (a) the sphere $S^3$, (b) the projective space $\mathbb{R}P^3$}\label{fig: gems-examples} and (c) the lens space $L(3,1)$.
  \end{center}
\end{figure}

\subsection{Cancellation of local contributions and locally-achiral gems}
We are interested in finding a systematic way for canceling the local terms in \eqref{eq: multi-ent-tqft}, and extracting the universal value $Z(M)$. Generally, from any multi-entropy measure in $d=2$ one can write down a probe that extracts the magnitude $\abs{Z(M)}$ using the cancellation scheme presented in \cite{del2026multipartite}. This is a generalization of the more familiar scheme of extracting the universal term in the entanglement entropy due to Kitaev and Preskill \cite{kitaev_2006_topological}. In many cases of interest, however, characterizing the phase of $\ket\y$ requires extracting the argument $\arg(Z(M))$. This is the case for invertible phases, where $\abs{Z(M)}=1$, and when distinguishing a phase from its time-reversed partner. Canceling the local terms in the phase turns out to be a subtler issue, and will be the main point of interest of this work. 

To begin, it is useful to give a condition for when a gem gives an entanglement measure with trivial phase $\arg(\mathcal{M})=0$.
\begin{definition}
  An $n$-graph $\GG$ will be called \textup{reflection-positive} (RP) if there exists a cut of $\GG$ into two parts, and an automorphism of $\GG$ that exchanges the two parts while preserving the edge colors and exchanging the vertex colors.
\end{definition}
Fig. \ref{fig: rp-gem} presents an example of an RP gem. The usefulness of reflection-positivity lies in the following proposition
\begin{proposition}
  If a multi-entropy measure $\mathcal{M}$ on an $n$-partite state $\psi$ defines a reflection-positive $n$-graph, then $\mathcal{M}(\y)\ge0$ for any state $\y$. 
\end{proposition}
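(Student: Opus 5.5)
We will write the multi-entropy as a tensor-network contraction that the reflection symmetry splits into a matrix and its adjoint. The result will have the form $\mathcal{M}(\y)=\langle \Phi|\Phi\rangle$, or more generally $\mathrm{Tr}(XX^\dagger)$, which is manifestly nonnegative.

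\emph{Setup.} Expand $\ket\y=\sum_{s}\y(s_{A_1},\dots,s_{A_n})\ket{s_{A_1}\cdots s_{A_n}}$ in a product basis, where $s_{A_i}$ collects the local degrees of freedom in region $A_i$. In the $n$-graph $\GG$, each white (ket) vertex $r$ carries a tensor $\y(s^{r}_{A_1},\dots,s^{r}_{A_n})$, and each black (bra) vertex $r'$ carries $\overline{\y(s^{r'}_{A_1},\dots)}$. An edge of color $i$ between ket $r$ and bra $\pi_{A_i}(r)$ contracts the index $s_{A_i}$ of the two tensors. This is exactly the content of \eqref{eq: multi-ent-def}: permuting region $A_i$ by $\pi_{A_i}$ identifies the $A_i$-index of ket replica $r$ with that of bra replica $\pi_{A_i}(r)$. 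So $\mathcal{M}(\y)$ is the full contraction of the colored graph $\GG$ with these vertex tensors.

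\emph{Use the reflection.} Let the cut split the vertex set into $V_1\sqcup V_2$, and let $\sigma$ be the automorphism with $\sigma(V_1)=V_2$ that preserves edge colors and swaps vertex colors. Contracting all edges internal to $V_1$ gives a tensor $X$ whose open indices are labelled by the cut edges $e$, each carrying the color $c(e)$ and so a region index. The automorphism sends internal edges of $V_1$ to internal edges of $V_2$ and swaps ket and bra vertices, so the contraction over $V_2$ equals $\overline{X}$ with its open legs relabelled by $\sigma$. Color preservation guarantees that the leg $e$ of $V_2$'s tensor corresponds to the leg $\sigma(e)$ with the same region label. Since $\sigma$ maps the set of cut edges to itself, it induces a permutation $\tau$ of the open legs, and
\begin{equation*}
\mathcal{M}(\y)=\sum_{\{s_e\}} X(\{s_e\})\,\overline{X(\{s_{\tau(e)}\})}.
\end{equation*}

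\emph{Main obstacle.} This expression is nonnegative when $\tau$ is trivial, i.e.\ when each cut edge is mapped to itself. Then it equals $\|X\|^2\ge0$. The step I expect to be delicate is showing $\tau=\mathrm{id}$. An edge $e=(u,v)$ with $u\in V_1$, $v\in V_2$ is sent to $(\sigma u,\sigma v)$ with $\sigma u\in V_2$ and $\sigma v\in V_1$. Its color is unchanged, and each vertex has exactly one edge of a given color. I would first try to arrange $\sigma$ to be an involution, which is what a reflection should be; if necessary, part of the argument is to replace $\sigma$ by one and adjust the cut accordingly. If $\sigma^2=\mathrm{id}$, then $\sigma v$ has its unique color-$c(e)$ edge going to $\sigma u$. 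Hence $v$'s color-$c(e)$ neighbour is $u$, and $\sigma u$'s color-$c(e)$ neighbour is $\sigma v$, so the edge $(\sigma v,\sigma u)$ is again a cut edge of color $c(e)$. To identify it with $e$ itself, one needs $\sigma u=v$. This is the natural reflection-positivity requirement, that cut edges join mirror-image vertices.

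With the identification $\sigma u=v$ the sum becomes $\sum|X|^2\ge0$. In the general case, where $\sigma$ only permutes the cut edges, $\sum_s X(s)\overline{X(\tau s)}$ need not be real, so the argument must use (or impose) that the reflection fixes the cut edges pointwise, as in the standard Osterwalder--Schrader argument.
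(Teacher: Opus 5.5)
Your argument is the paper's: split the network along the cut into $T_1$ (on $V_1$) and $T_2$ (on $V_2$), use the edge-color-preserving, vertex-color-swapping automorphism to get $T_2=\overline{T_1}$ with matched legs, and conclude $\mathcal{M}(\psi)=\sum|T_1|^2\ge 0$. The hypothesis you isolate is exactly what the paper assumes tacitly when it writes $T_1=T_2^*$: every cut edge $(u,v)$ must satisfy $\sigma(u)=v$. This already forces $\sigma(v)=u$ by color preservation, so the involution detour is unnecessary; and the hypothesis cannot be dropped, since with the ket/bra cut any vertex-color-swapping automorphism would qualify, yet $\mathrm{Tr}[(\rho_{AB}^{T_B})^3]$ (with $\pi_A=(123)$, $\pi_B=(132)$, $\pi_C=e$) is real but negative for a purified $4\times4$ antisymmetric Werner state, so your ``need not be real'' should read ``need not be nonnegative''.
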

\begin{proof}
  Representing $\ket\psi$ as a tensor with $n$ legs, and consider the tensor network representing $\mathcal{M}(\y)$. The tensors $T_p$ on each side of the partition can be written as $(T_p)_{i_1\cdots i_k}$ where $k$ is the number of legs crossing the partition. Since reflection exchanges edge colors, it exchanges the tensors $\y$ and $\y^*$. As a result, we have $T_1=T_2^*$. We therefore get 
  \begin{equation}
    \mathcal{M}(\y)=\sum_{i_1,...,i_k}(T_1)_{i_1,...,i_K}(T_1)_{i_1,...,i_K}^* \ge 0.
  \end{equation}
\end{proof}

Note that the gem of $\mathbb{R}P^3$ in Fig. \ref{fig: gems-examples} is not RP. It does, however, satisfy a weaker condition that there exists an automorphism that exchanges vertex colors by preserves edge colors. As a result, the related multi-entropy measure satisfies $\MM_{\mathbb{R}P^3}\in\R$. Also note that any such automorphism defines, topologically, an orientation-reversing involution on the manifold $M$ \footnote{This follows since orientation is defined on each simplex as an ordering of the vertices of the simplex up to an even permutation. Since the orientations should match between two adjacent simplexes, the ordering should be opposite between the two vertex colors, and any symmetry that exchanges them should be orientation-reversing.}. It therefore will not exist whenever the manifold does not admit any such automorphism, that it, when it is a chiral manifold.

To ensure the cancellation of local contributions to the phase, we require that the value of $\arg(\MM)$ is trivial for any state that is supported only on $d+1$ of the regions $A_1,...,A_{d+2}$. The tensor network representing the evaluation of $\MM$ on such a state is the tensor network obtained from $\GG$ with the leg corresponding to region $i$ missing. It is therefore given by $\GG_i$. Therefore, the phase will be trivial for all such locally-entangled states if any $\GG_i$ is RP. This leads us to the main definition of this work:
\begin{definition}[locally-achiral]
  A gem $\GG$ will be called \textup{locally achiral} if the residues $\GG_i$ are RP for any color $i$. A manifold will be called locally-achiral if it admits a locally-achiral gem.
\end{definition}
Note that the residues $\GG_i$ encode the structure of the triangulation around the vertices. A locally-achiral gem is one that has neighborhoods of the vertices that look symmetric under a local orientation-reversing symmetry, even if it might not have such an orientation-reversing symmetry globally. 

Assuming the conjecture above, whenever a manifold $M$ is locally-achiral, we can extract $Z(M,\y)$ using multi-entropy measures. Previous work \cite{sheffer2025extracting} relied on the fact that the lens spaces $L(p,q)$ are chiral but locally achiral (see Fig. \ref{fig: gems-examples}c for an example) to show that the lens space partition functions can be extracted from multi-entropies of the ground state wavefunction. We will later show that not all manifolds are locally achiral, but it remains a question of which manifolds have this property. While we are far from obtaining a general answer to this question, we will make two contributions towards it; both have some interesting implications by themselves. 

\begin{figure}
  \begin{center}
    \begin{tikzpicture}[scale=.8,manifold gem]
      \def\r{.6}
      \def\rr{1.4}
      \node[v1] (11) at (\r,\r) {};
      \node[v2] (21) at (\rr,\rr) {};
      \node[v2] (12) at (-\r,\r) {};
      \node[v1] (22) at (-\rr,\rr) {};
      \node[v1] (13) at (-\r,-\r) {};
      \node[v2] (23) at (-\rr,-\rr) {};
      \node[v2] (14) at (\r,-\r) {};
      \node[v1] (24) at (\rr,-\rr) {};

      \draw[e0] (11) to (12);
      \draw[e0] (13) to (14);
      \draw[e0] (21) to (22);
      \draw[e0] (23) to (24);
      \draw[e1] (11) to (21);
      \draw[e1] (12) to (22);
      \draw[e1] (13) to (23);
      \draw[e1] (14) to (24);
      \draw[e2] (11) to (14);
      \draw[e2] (13) to (12);
      \draw[e2] (21) to (24);
      \draw[e2] (23) to (22);
      \draw[thin,dashed] (-2,0) -- (2,0);
  \end{tikzpicture}
  \end{center}
  \caption{A reflection-positive 2-gem. The dashed line represents the cut. For any tripartite state $\ket\y$ the resulting tensor network defined by this gem is evaluated as an inner product of the tensors above and below the dashed line, giving a positive result.}
  \label{fig: rp-gem}
\end{figure}

In Sec. \ref{sec: 3-manifolds} we present a scheme for obtaining locally-achiral gems of certain 3-manifolds. In particular, we obtain locally-achiral gems for the manifolds obtained by integer surgery of the figure-8 knot. The partition functions on these manifolds distinguish between some of the theories given by Mignard and Schauenburg \cite{Mignard_Schauenburg_2021} (MS) as examples of anyon theories that cannot be distinguished by modular data. We argue that infinitely many of the MS examples can be distinguished in this manner. This result shows that multi-entropy is capable of distinguishing between phases beyond modular data.  

In Sec. \ref{sec: 4-manifolds} we obtain a result in the other direction: we prove that 4-manifolds with nonzero Pontryagin number $p_1$ are not locally-achiral. We conjecture that this is true for any manifold of dimension 0 mod 4 with nonzero Pontryagin number. We justify this conjecture on physical grounds, based on the relation between Pontryagin numbers and time-reversal symmetry protected topological orders (T-SPTs) in these dimensions. We also show how the gem of the manifold $\mathbb{C}P^2$ can be used to construct a multi-entropy measure that detects the beyond-cohomology T-SPT in 3+1 dimensions, given by the 3-fermion Walker-Wang model \cite{Walker_Wang_2012, Burnell_Chen_Fidkowski_Vishwanath_2014, Haah_Fidkowski_Hastings_2023}. 

In Sec. \ref{sec: conclusions} we conclude and discuss a few open questions for future work.

\section{Three-dimensional probes: detection of beyond-modular invariants}
\label{sec: 3-manifolds}

\subsection{The Mignard-Schauenburg examples}
The main goal of this section is to present locally-achiral manifolds whose partition functions detect TQFT data beyond modular data. As a motivating example, we begin by presenting the Mignard-Schauenburg (MS) examples of modular categories that are not determined by modular data. These are constructed as follows: for primes $p,q$ satisfying $p|q-1$, there is a unique nonabelian group of order $pq$ given by 
\begin{equation}
  G_{p,q}=\ev{ a,b|b^p=a^q=1,bab^{-1}=a^s}
  \label{eq:G_pq}
\end{equation}
where $s$ is an integer satisfying $s\neq1, s^p=1\mod q$ (the resulting group does not depend on the choice of $s$, and we will always take an implicit choice of $s$). The third group cohomology is given by $H^3(G_{p,q},U(1))=\Z_p$ and the group cocycles are obtained by composing the quotient map $G_{p,q}\to \Z_p$ with the group cocycles of $\Z_p$. We can therefore label them by an integer $u=0,...,p-1$. As was shown in \cite{Mignard_Schauenburg_2021}, the twisted quantum double categories $D(G_{p,q},u)$ are non-isomorphic for different values of $u$. On the other hand, if $u_1,u_2$ are both 0, quadratic residues, or quadratic nonresidues mod $p$, then there exists a map $\rho$ between the anyon labels of $D(G_{p,q},u_1)$ and $D(G_{p,q},u_2)$, such that $(S_1)_{kl}=(S_2)_{\rho(k)\rho(l)}$, $(T_1)_{kl}=(T_2)_{\rho(k)\rho(l)}$ where $S_i,T_i$ are the $S,T$ matrices corresponding to $u_i$. In other words, knowledge of the $S$ and $T$ matrices does not suffice to distinguish between the theories. 

To begin, we focus on the simplest MS example, with $p=5,q=11$. In this case, the theories with $u=1,4$ have the same modular data, and similarly those with $u=2,3$. In \cite{Delaney_Tran_2018} it was shown that the modular data, together with the values $K_i$ of the anyon diagrams of the figure-8 knot (see Fig. \ref{fig: fig-8}) suffice to distinguish between the $G_{5,11}$ quantum doubles. In fact, a slightly stronger result is true:
\begin{proposition}
  Let $\fei n$ be the manifold obtained by integer surgery with integer $n$ on the figure-8 knot, then $Z({\rm F8}_5, D(G_{5,11},u))$ distinguishes between the theories of different $u$.  
\end{proposition}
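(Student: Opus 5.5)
The plan is to compute $Z(\fei 5, D(G,u))$ directly, with $G=G_{5,11}$. For a twisted quantum double, the Reshetikhin--Turaev invariant equals the Dijkgraaf--Witten invariant of the closed 3-manifold, normalized by $\mathcal{D}=|G|$. Explicitly,
\begin{equation}
  Z(M, D(G,u)) = \frac{1}{|G|}\sum_{\phi\in \mathrm{Hom}(\pi_1(M),G)} \langle \phi^*\omega_u, [M]\rangle ,
\end{equation}
where $\omega_u$ is the cocycle pulled back from $\Z_5$ along the quotient map $G\to\Z_5$ described above.

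\textbf{Enumerating homomorphisms.} The fundamental group of $\fei n$ is the figure-8 knot group $\langle x,y \mid wx=yw\rangle$ with $w=yx^{-1}y^{-1}x$, modulo the relation $\lambda x^{n}=1$, where $\lambda$ is the longitude. Since $\omega_u$ factors through $\Z_5$, the evaluation $\langle\phi^*\omega_u,[M]\rangle$ depends only on the composite $\bar\phi:\pi_1(M)\to\Z_5$. First I will enumerate the homomorphisms $\phi$. Meridians are conjugate, so they map either to a single conjugacy class of $G$ or to the identity. There are three cases.
\begin{itemize}
  \item Meridians map into the normal subgroup $\Z_{11}$. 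Then $\bar\phi$ is trivial and the term contributes $1$.
  \item Meridians map to an element of order $5$. The knot relation must then be solved in $G$. Because $H_1(\fei 5)=\Z_5$, abelian maps to $\Z_5$ do exist. Nonabelian representations into $G$ correspond to dihedral-type (metabelian) representations, which are controlled by the Alexander polynomial $t^2-3t+1$ evaluated at an $s$ of order $5$ mod $11$.
  \item Finally, impose the surgery relation $\lambda x^5=1$.
\end{itemize}

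\textbf{Evaluating the phases.} For each $\phi$ with nontrivial $\bar\phi$, the phase is the $\Z_5$ Dijkgraaf--Witten invariant of $\fei 5$ evaluated at the class $\bar\phi\in H^1(\fei 5;\Z_5)$, for cocycle level $u$. For a $\Z_p$ class on a manifold obtained by integer surgery on a knot, this is given by the linking form. Since $H_1=\Z_n$ has generator $\mu$ with self-linking $1/n$, a class sending $\mu\mapsto k$ contributes
\begin{equation}
  \exp\!\Big(2\pi i\, \frac{u k^2}{5}\cdot \frac{(5)}{5}\Big)\ \text{type phases},
\end{equation}
that is, $e^{2\pi i u k^2 n/p^2}$ up to convention. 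The problem then reduces to a Gauss-type sum weighted by $N_k$, the number of lifts of each $\bar\phi$ to $G$. The result has the form
\begin{equation}
  Z=\frac{1}{55}\Big(N_0+\sum_{k=1}^{4} N_k\, \zeta^{u k^2}\Big)
\end{equation}
with $\zeta$ a primitive root of unity (order $25$ for $n=5$).

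\textbf{Comparison across $u$.} By symmetry $k\to -k$ we have $N_k=N_{-k}$. Still, $\zeta^{uk^2}$ with $\zeta$ of order $25$ is not invariant under $u\mapsto u'$ with $u'/u$ a square mod 5. This is exactly why $n=5$ should work while the modular data fail. The remaining step is to check that the resulting algebraic numbers differ for $u=1,2,3,4$ (and $u=0$).

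\textbf{Main obstacle.} The main obstacle is the careful counting of $N_k$. Equally delicate is fixing the exact normalization of the surgery phase $e^{2\pi i u k^2/(pn)}$-type term, including possible framing corrections, which drop out since $D(G)$ has $c=0. If the closed form is unwieldy, I would fall back on verifying the claim numerically. This can be done either by enumerating the homomorphisms $\pi_1\to G_{5,11}$ by computer, or by evaluating the surgery formula $Z=\mathcal{D}^{-1}\sum_a S_{0a}\,\theta_a^{5} K_a$ with the figure-8 values $K_a$ from \cite{Delaney_Tran_2018}.
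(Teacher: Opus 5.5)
Your route differs from the paper's. The paper evaluates the surgery formula $Z({\rm F8}_5)=D^{-2}\sum_i d_i\theta_i^5K_i$ numerically with the Delaney--Tran code and lists the five values. You instead rewrite $Z$ as a Dijkgraaf--Witten sum over $\mathrm{Hom}(\pi_1({\rm F8}_5),G_{5,11})$. That framework is sound, and the $|G|^{-1}$ normalization matches the paper's.

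\textbf{The gap.} As written, your main line does not prove the statement. The counts $N_k$, which carry all the content, are never determined. Distinctness is deferred to a numerical fallback that is exactly the paper's proof.

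\textbf{An error.} One ingredient you do state is wrong. For $n=p=5$ the phase is $e^{2\pi i uk^2 n/p^2}=\zeta_5^{uk^2}$, a fifth root of unity, not a root of order $25$. Your later $e^{2\pi i uk^2/(pn)}$ normalization already fails for the double semion on $\mathbb{RP}^3$: there $Z=0$, which forces the phase $-1$, not $i$.

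\textbf{A wrong explanation.} Your reason for why $n=5$ works is therefore off. Substituting $k\to tk$ shows that $\sum_k N_k\zeta_5^{uk^2}$ is invariant under $u\to ut^2$ whenever $N_k$ is constant on $k\neq 0$, as happens for lens spaces. Two things actually break the symmetry: $5\mid n$ is needed for a nontrivial $\bar\phi$ to exist at all, and $N_k$ must be non-uniform in $k$.

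\textbf{Closing the gap.} The count is short.
\begin{itemize}
  \item If the meridian maps into $\Z_{11}$, the image is abelian. So $\phi$ factors through $H_1=\Z_5$ and is trivial, giving $N_0=1$.
  \item If the meridian maps into $[b^k]$ with $k\neq0$, homomorphisms of the knot group are colorings of the figure-8 diagram by the Alexander quandle on $\Z_{11}$ with multiplier $s^k$. There are $121$ of them if $s^{2k}-3s^k+1\equiv 0 \pmod{11}$, and $11$ otherwise.
  \item The roots are $5$ and $9=5^{-1}$, both of order $5$. So with $s=3$ you get $N_{\pm 2}=121$ and $N_{\pm1}=11$.
  \item The surgery relation is automatic: $\mu^5\mapsto 1$, and the longitude lies in the second commutator subgroup, which dies in the metabelian $G$.
\end{itemize}
Hence
\begin{equation*}
  Z({\rm F8}_5,u)=\frac{1}{55}\left(1+22\,\zeta_5^{u}+242\,\zeta_5^{4u}\right).
\end{equation*}
This gives $265/55=53/11$ at $u=0$ and reproduces the paper's table. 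The one exception is that the imaginary parts for $u=2,3$ come out as $\mp\sqrt{2(5-\sqrt5)}$, so the printed $\sqrt{2(5+\sqrt5)}$ there appears to be a typo; this does not affect distinctness. Since $\zeta_5,\dots,\zeta_5^4$ form a $\mathbb{Q}$-basis of $\mathbb{Q}(\zeta_5)$, the five values are pairwise distinct.

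Completed this way, your argument gives a closed form and a structural reason (two Alexander-polynomial roots of order $5$ mod $11$) in place of a computer-generated table.
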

\begin{proof}
  The partition function is given by \cite{witten1989quantum}
  \begin{equation}
    Z({\rm F8}_n) = \frac{1}{D^2}\sum_i d_i\q_i^n K_i
    \label{eq: fig-8-partition-fn}
  \end{equation}
where $d_i,\q_i$ are the quantum dimensions and topological spins of the anyons, $D^2=\sum_i d_i^2$, and the sum is taken over all anyon species. The partition function is then calculated using the code presented in \cite{Delaney_Tran_2018}, giving 
\begin{equation}
  \begin{aligned}
    Z(\fei 5,u=0)&=\frac{53}{11} \\
    Z(\fei 5,u=1)&=-\frac{13}{11}+\frac{6\sqrt5}{5}-i\sqrt{2(5+\sqrt5)} \\
    Z(\fei 5,u=2)&=-\frac{13}{11}-\frac{6\sqrt5}{5}-i\sqrt{2(5+\sqrt5)} \\
    Z(\fei 5,u=3)&=-\frac{13}{11}-\frac{6\sqrt5}{5}+i\sqrt{2(5+\sqrt5)} \\
    Z(\fei 5,u=4)&=-\frac{13}{11}+\frac{6\sqrt5}{5}+i\sqrt{2(5+\sqrt5)} \\
  \end{aligned}
\end{equation}
\end{proof}

While the partition functions on $\fei n$ suffice to distinguish the theories $D(G_{5,11},u)$, they do not generally distinguish the theories $D(G_{p,q},u)$. We analyze this further in App. \ref{app: fei partition function}. Specifically, we prove the following:
\begin{proposition}
\label{prop: fei q conditions}
    For $p|q-1$, let $u,u'$ such that both are either quadratic residues or quadratic nonresidues modulo $p$ ($u$ is a quadratic residue if $u=t^2$ for some $t$). Then:
    \begin{enumerate}
        \item $Z(\fei n,D(G_{p,q},u))=Z(\fei n,D(G_{p,q},u'))$ for any $p\nmid n$.
        \item  We have $Z(\fei p,D(G_{p,q},u))\neq Z(\fei p,D(G_{p,q},u'))$ is and only if there exists $k$ such that $s^{2k}-3s^k+1=0\mod q$. In particular, $Z(\fei p,D(G_{p,q},u))= Z(\fei p,D(G_{p,q},u'))$ for any $q\neq\pm1\mod5$.
    \end{enumerate}
\end{proposition}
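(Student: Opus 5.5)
The plan is to compute $Z(\fei n, D(G_{p,q},u))$ directly from the formula \eqref{eq: fig-8-partition-fn}, organizing the sum over anyons of the twisted double by conjugacy classes, and to track explicitly how $u$ enters. Anyons of $D(G,\omega)$ are pairs $(g,\chi)$ with $g$ a conjugacy-class representative and $\chi$ an irreducible $\omega_g$-projective character of the centralizer $C_G(g)$. For $G_{p,q}$ the classes are: the identity; the $(q-1)/p$ classes of nontrivial elements of $\langle a\rangle$, each with centralizer $\Z_q$; and the $p-1$ classes $b^j\langle a\rangle$, $j\neq 0$, each with centralizer $\langle b\rangle\cong\Z_p$. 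Since the cocycle is pulled back from $\Z_p$, $\omega_g$ is trivial on $\Z_q$. Hence the identity and $a$-type anyons have $u$-independent data, apart from possible mixed contributions to $K_i$. For the $b^j$-type anyons, the twist is $\theta = e^{2\pi i (u j^2 + p j m)/p^2}$-type (a character of $\Z_{p^2}$ in the standard $\Z_p$-twisted double), with quantum dimension $q$.

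\textbf{Step 1.} Compute the figure-8 invariants $K_i$ for each anyon in the group-theoretic picture. For a twisted quantum double, $K_{(g,\chi)}$ can be written as a sum over representations of $\pi_1$ of the figure-8 complement (with meridian in the class of $g$) into $G$, weighted by $\chi$ and cocycle phases. I would use the presentation $\pi_1=\langle x,y\mid wx=yw,\ w=x y^{-1}x^{-1}y\rangle$ and enumerate homomorphisms into $G_{p,q}$. Because the abelianization of the knot group is $\Z$, both meridians map to the same coset of $\langle a\rangle$. If they map to $b^j a^{\alpha}$ and $b^j a^{\beta}$, the relation reduces to a condition on $\alpha-\beta$ in $\Z_q$ of the form $(s^{2k}-3s^k+1)(\alpha-\beta)=0 \bmod q$, with $s^k$ the action of $b^j$. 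This is the Alexander polynomial $t^2-3t+1$ evaluated at $t=s^k$. So nonabelian representations exist exactly when $s^{2k}-3s^k+1\equiv 0 \bmod q$ for the relevant $k$; otherwise only abelian (cyclic-image) representations contribute.

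\textbf{Step 2.} Show that abelian representations give contributions depending on $u$ only through Gauss-sum-type expressions $\sum_j \theta_j^n(\dots)$. Summing over $m$ in the twist of the $b^j$-anyons imposes a constraint mod $p$ that makes the total depend on $u$ only via $\sum_j e^{2\pi i u n' j^2/p}$ for some $n'$ proportional to $n$ (or its inverse). When $p\nmid n$, the Gauss sum $\sum_j e^{2\pi i c j^2/p}=\left(\tfrac{c}{p}\right)\sqrt{\pm p}$ depends on $u$ only through its Legendre symbol. Any nonabelian-representation terms should likewise reduce to Gauss sums in $u n$ times $u$-independent factors. I would check that this holds by showing that the cocycle evaluated on the image is still a quadratic phase in $j$. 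This proves part 1.

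\textbf{Step 3.} For $n=p$ the $\theta^n$ factor loses its $u$-dependence at leading order. In the abelian sector the remaining $u$-dependence cancels and gives equal values. The only surviving $u$-dependence comes from the nonabelian representations in Step 1, where the cocycle contributes an extra phase linear in $u$ that is not killed by the $n=p$ power. One then shows that this term is a nonzero character sum distinguishing $u$ from $u'$, giving part 2. The final claim follows because $t^2-3t+1$ has roots in $\Z_q$ only when $5$ is a square mod $q$, which by quadratic reciprocity means $q\equiv\pm1\bmod 5$. Moreover, the root must be a power of $s$, i.e. of order dividing $p$.

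I expect the main obstacle to be Step 3, specifically the nonvanishing ("only if" and "if") direction. Careful bookkeeping of the cocycle phase on nonabelian representations is needed to show that the difference is a nonzero multiple of a sum like $\sum_j \left(e^{2\pi i u j^2\cdot c/p}-e^{2\pi i u' j^2\cdot c/p}\right)$. I would first verify it numerically with the code of \cite{Delaney_Tran_2018} for small $(p,q)$, e.g. $(5,11)$ and $(5,31)$, to pin down the exact phase.
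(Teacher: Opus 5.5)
Your Step 1 is sound. It is the paper's computation in group language: conjugation on the class $b^k\langle a\rangle$ is exactly the Alexander quandle $x\rhd y=(1-s^k)x+s^ky$ that the paper uses to colour the figure-8 diagram. The colouring count is $q$ or $q^2$ according to whether $s^{2k}-3s^k+1\equiv 0 \pmod q$.

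The gap is a fact you never establish, and which Step 3 actually contradicts: $K_{B_{k,t}}$ depends neither on the character label $t$ (your $m$) nor on $u$.
\begin{itemize}
\item In the paper, the only $(t,u)$-dependence in the quandle formula is the framing factor $\theta_{B_{k,t}}^{\mathrm{Wr}}$, and the figure-8 diagram has writhe $0$.
\item In your language, the longitude is null-homologous, so $\phi(\ell)\in[G,G]=\langle a\rangle$. It also commutes with $\phi(m)=b^ka^\alpha$, whose centralizer $\langle b^ka^\alpha\rangle$ has order $p$. Hence $\phi(\ell)=e$ and the character contributes $1$.
\item Since $\omega_u$ is pulled back from $\Z_p$, and every colouring induces the same map $\pi_1\to\Z_p$ (it factors through $H_1=\Z$), the cocycle contributes the same phase as for the abelian colourings. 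At zero framing that phase is trivial.
\end{itemize}
So the ``extra phase linear in $u$'' on non-abelian representations, on which your Step 3 relies, does not exist.

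Once you have this, the mechanisms are different from the ones you describe. Because $K$ does not depend on $t$, for $1\le k\le p-1$ you get $\sum_{t=0}^{p-1}\theta_{B_{k,t}}^n=p\,\delta_{p\mid n}\,e^{2\pi i nuk^2/p^2}$.

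\textbf{Part 1.} For $p\nmid n$ the whole $u$-dependent sector vanishes, so $Z(\fei n)$ does not depend on $u$ at all. Your Step 2 route, in which non-abelian terms ``reduce to Gauss sums in $un$'', would not prove this. Let $\mathcal{K}$ be the set of $k$ with $s^{2k}-3s^k+1\equiv 0$. A sum restricted to $k\in\mathcal{K}$ is not a function of the Legendre symbol; only the vanishing of the $t$-sum removes it.

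\textbf{Part 2.} At $n=p$ your Step 3 has it backwards. $\theta_{B_{k,t}}^p=e^{2\pi i uk^2/p}$ loses its $t$-dependence and keeps its $u$-dependence. The sector then equals $\frac1p\sum_{k=1}^{p-1}e^{2\pi i uk^2/p}+\frac{q-1}{p}\sum_{k\in\mathcal{K}}e^{2\pi i uk^2/p}$.
\begin{itemize}
\item The first term is a full quadratic Gauss sum minus $1$. It cancels between $u$ and $u'$ when they have the same Legendre symbol.
\item If $\mathcal{K}\neq\emptyset$, then $\mathcal{K}=\{\pm k_0\}$, since $s^{-k}$ is the other root. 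The second term is then $\frac{2(q-1)}{p}e^{2\pi i uk_0^2/p}$, which separates every $u\not\equiv u'$ because $p\nmid k_0$.
\end{itemize}
It is this partial Gauss sum, not a cocycle phase, that gives the ``if and only if''.

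Your closing remark, that the root must be a power of $s$, i.e.\ satisfy $x^p\equiv 1$, is correct.
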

This shows that there are infinitely many examples where $D(G_{p,q},u)$ are not distinguished by $\fei n$. In the appendix, we estimate that there are also infinitely many examples (albeit with a vanishing fraction) that are distinguishable in this manner. 

After showing that the manifolds $\fei n$ are useful for distinguishing beyond-modular data, it remains to show that there exist locally-achiral gems representing them. This will be done in the following sections.

\begin{figure}
  \centering
  \includegraphics[scale=1.3]{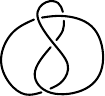}
  \caption{The figure-8 knot}\label{fig: fig-8}
\end{figure}

\subsection{Generalities of 3-gems}
In this section we recall several properties of a 3-manifold $M$ with a gem $\GG$. A thorough discussion of gems of 3-manifolds and their application to the classification of 3-manifolds with small gems, can be found in the book \cite{Lins_1995}. The first property is a simplification of Definition \ref{def: gem} in the case of 3-manifolds. Letting $\GG_{ij}$ be the $ij$ residue of $\GG$ (the graph obtained by removing the edges with colors $ij$ from $\GG$) and $g_{ij}$ being the number of disconnected components of $\GG_{ij}$, we have:
\begin{theorem}[\cite{Lins_1995}]
  \label{th: 3-gem-condition}
  A 4-graph is a gem if and only if 
  \begin{enumerate}
  \item $g_{ij}=g_{kl}$ for all $ijkl$ distinct.
  \item $g_{12}+g_{23}+g_{13}=2+\abs{\GG}/2$, where $\abs{\GG}$ is the number of vertices in $\GG$.
  \end{enumerate}
\end{theorem}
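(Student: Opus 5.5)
We prove the criterion by computing the Euler characteristics of the residues and using the fact that a closed surface is a sphere exactly when its Euler characteristic is $2$ (per component). By Definition \ref{def: gem}, a 4-graph $\GG$ (colors $0,1,2,3$) is a gem if and only if every connected component of every 3-residue $\GG_i$ triangulates $S^2$. Each $\GG_i$ is a 3-graph, and every 3-graph triangulates a closed surface. The 2-cells of that surface are the bicolored cycles, i.e.\ the components of the 2-residues $\GG_{ij}$ with $i,j\neq k$ colors.

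\textbf{Step 1: Euler characteristic of each residue.} Fix the removed color $k$ and let $\{a,b,c\}$ be the three remaining colors. Working in the dual (cellular) picture, the surface $\Sigma_k$ triangulated by $\GG_k$ has:
\begin{itemize}
\item $V=|\GG|$ vertices, namely the vertices of the graph;
\item $E=3|\GG|/2$ edges;
\item $F=g_{ab}'+g_{bc}'+g_{ac}'$ faces, where $g'_{ab}$ denotes the number of $\{a,b\}$-bicolored cycles.
\end{itemize}
In the paper's notation, $\GG_{ij}$ removes colors $i,j$ and keeps the other two. So the $\{a,b\}$-bicolored cycles are the components of $\GG_{kc}$, and $g'_{ab}=g_{kc}$. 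Therefore
\begin{equation}
\chi(\Sigma_k)=g_{kc}+g_{ka}+g_{kb}-\tfrac{1}{2}|\GG|.
\end{equation}

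\textbf{Step 2: reduction to an inequality and its sharpness.} Let $c_k$ be the number of components of $\GG_k$. Each component is a closed (possibly nonorientable) surface, so it has Euler characteristic at most $2$, with equality only for $S^2$. Hence $\GG$ is a gem if and only if $\chi(\Sigma_k)=2c_k$ for all $k$, and in general $\chi(\Sigma_k)\le 2c_k$.

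The main obstacle is that $c_k$ does not appear in the stated conditions. We handle it with a global count, using the standard fact that the dual complex $K$ of any 4-graph is a 3-dimensional pseudomanifold. Its cells are counted as follows:
\begin{itemize}
\item vertices: $\sum_k c_k$;
\item edges: $\sum_{i<j} g_{ij}$, one for each component of each $\GG_{ij}$;
\item triangles: $2|\GG|$, since $\GG$ has $4|\GG|/2$ edges;
\item tetrahedra: $|\GG|$.
\end{itemize}
Because $K$ is an odd-dimensional closed pseudomanifold, the links are the surfaces $\Sigma_k$, and the standard identity gives
\begin{equation}
\chi(K)=\tfrac12\sum_{v}\bigl(2-\chi(\mathrm{lk}\,v)\bigr).
\end{equation}
This vanishes exactly when $K$ is a manifold. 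Computing $\chi(K)$ directly,
\begin{equation}
\chi(K)=\sum_k c_k-\sum_{i<j}g_{ij}+2|\GG|-|\GG|=\sum_k c_k-\sum_{i<j}g_{ij}+|\GG|.
\end{equation}
Summing the formula of Step 1 over $k$ counts each $g_{ij}$ twice, giving
\begin{equation}
\sum_k \chi(\Sigma_k)=2\sum_{i<j}g_{ij}-2|\GG|.
\end{equation}
Comparing the two expressions, the gem condition becomes
\begin{equation}
\sum_{i<j} g_{ij}=|\GG|+\sum_k c_k.
\end{equation}

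\textbf{Step 3: matching with conditions 1 and 2.}

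For the converse, assume conditions 1 and 2. Then
\begin{equation}
\sum_{i<j} g_{ij}=2(g_{12}+g_{13}+g_{23})=4+|\GG|.
\end{equation}
The gem condition therefore reduces to $\sum_k c_k=4$. This is automatic for a connected $\GG$, where each $c_k=1$; we assume $\GG$ is connected, as is implicit for gems of connected manifolds. Since $\chi(\Sigma_k)\le 2c_k$ for every $k$, equality in the sum forces equality for each $k$, so every residue component is a sphere.

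For the forward direction, suppose $\GG$ is a gem, so that $\chi(\Sigma_k)=2$ for all $k$ by connectedness. These are four linear equations in the six unknowns $g_{ij}$. Subtracting the equations for two colors $k,l$ gives
\begin{equation}
g_{ka}+g_{kb}=g_{la}+g_{lb},
\end{equation}
where $\{a,b\}$ are the remaining colors. Combining such differences yields $g_{ij}=g_{kl}$ for complementary pairs, which is condition 1. Substituting condition 1 into any single equation $\chi(\Sigma_k)=2$ gives condition 2.

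The delicate point is the pseudomanifold Euler identity used in Step 2. If one prefers to avoid it, the converse can instead be obtained by summing the inequalities $\chi(\Sigma_k)\le 2$ and checking that conditions 1 and 2 force the sum to equal $8$.
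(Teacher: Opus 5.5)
\textbf{Gap: the step that disposes of $c_k$.} Both directions rely on the claim that connectedness of $\GG$ forces every 3-residue $\GG_k$ to be connected. You use it as ``$\chi(\Sigma_k)=2$ for all $k$ by connectedness'' and as ``$\sum_k c_k=4$ \dots\ is automatic''. The claim is false. In the dual triangulation, $c_k$ is the number of vertices of color $k$, and connected gems routinely have $c_k>1$.

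\emph{Forward direction.} Insert a dipole into the color-$0$ edge of the 2-vertex gem of $S^3$. This gives the connected 4-graph on $u,w,x,y$ in which $uw$ and $xy$ carry colors $1,2,3$ and $ux$, $yw$ carry color $0$. Every residue component is a sphere, so it is a gem of $S^3$. But $\GG_0$ has two components, and $g_{01}=2\neq 1=g_{23}$, so condition 1 fails.

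\emph{Converse.} Your own identity $\sum_{i<j}g_{ij}=|\GG|+\sum_k c_k$ shows that a connected, non-contracted 4-graph satisfying conditions 1 and 2 is never a gem, and such graphs exist. On $v_1,\dots,v_6,x,y$ take
color-$1$ edges $v_1v_2,v_3v_4,v_5v_6$;
color-$2$ edges $v_2v_3,v_4v_5,v_6v_1$;
color-$3$ edges $v_1v_4,v_3v_6,v_2v_5$;
color-$0$ edges $v_1x,v_2y,v_3v_6,v_4v_5$;
and $xy$ in colors $1,2,3$.
Every bicolored subgraph has exactly two cycles, so all $g_{ij}=2$ and $g_{12}+g_{13}+g_{23}=6=2+|\GG|/2$. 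Yet the component of $\GG_0$ on $v_1,\dots,v_6$ has $\chi=3-3=0$, so it is a torus. Your fallback remark (summing $\chi(\Sigma_k)\le 2$) has the same problem, since only $\chi(\Sigma_k)\le 2c_k$ holds in general.

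\textbf{What the statement needs.} It is true only for \emph{contracted} 4-graphs (crystallizations), in which every $\GG_k$ is connected. This is the hypothesis of the Lins--Gagliardi criterion. It is also what the paper's one-line proof tacitly uses when it says each $\GG_i$ is a sphere if and only if its Euler characteristic is $2$. You should impose $c_k=1$ as a hypothesis rather than try to derive it.

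Once you do, your proof goes through and matches the paper's. Step~1's formula $\chi(\Sigma_k)=g_{ka}+g_{kb}+g_{kc}-|\GG|/2$, combined with the linear algebra of Step~3, shows that conditions 1 and 2 together are equivalent to $\chi(\Sigma_k)=2$ for every $k$. For a connected closed surface, $\chi=2$ is equivalent to being $S^2$. The pseudomanifold Euler identity of Step~2 and the summation argument in the converse then become unnecessary: the former is just the sum of the inequalities $\chi(\Sigma_k)\le 2c_k$.
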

The theorem follows since $\GG_i$ are triangulations of 2-dimensional manifolds. These are spheres if and only if their Euler characteristic is 2, and the latter can be calculated combinatorially in terms of $\GG$. 

We will also use a method for extracting the fundamental group from a 3-gem. To do so, first consider the disconnected components of $\GG_{12}$, labeled by the symbols $x_1,...,x_{m+1}$. These are cycles of edge colors alternating between 3,4. The generators of $\pi_1(M)$ will be all symbols $x_i$ but one (any choice of $x_1,...,x_m$) will do. For the relations, consider the $m+1$ cycles of $\GG_{34}$ (their number is the same as the number of cycles in $\GG_{12}$ by Theorem \ref{th: 3-gem-condition}. Assuming that the $i$'th cycle passes through vertices $v_{k_1},...,v_{k_{2l}}$ and let $x_{k_i}$ be the cycle of $\GG_{12}$ containing $v_{k_i}$. The relation associated with the cycle $i$ is then 
\begin{equation}
  r_i=x_{k_1} x_{k_2}^{-1}x_{k_3},...,x_{k_{2l}}^{-1}.
\end{equation}
where it is implied that $x_{m+1}=1$. We have
\begin{theorem}[\cite{Lins_1995}]
  \label{th: fun-group-from-gem}
  Let $M$ be a 3-manifold with gem $\GG$. Its fundamental group is given by
  \begin{equation}
    \pi_1(M)=\left\langle x_1,...,x_m | r_1,...,r_m \right\rangle.
    \label{eq: pi_1(M)}
  \end{equation}
\end{theorem}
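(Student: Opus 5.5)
The plan is to read the presentation off the cell structure dual to the triangulation encoded by $\GG$, and then apply van Kampen to a handle decomposition of $M$. Recall that a vertex of $\GG$ is a tetrahedron, and an edge of color $c$ is a gluing along the face opposite the vertex of color $c$. A connected component of a residue $\GG_{ij}$ is then an edge of the triangulation, namely the one joining the vertices of colors $k,l$ (with $\{i,j,k,l\}$ the four colors). Around that edge, the tetrahedra sharing it are glued cyclically by faces of colors $k,l$, i.e.\ along an alternating $k$--$l$ cycle. In particular the cycles of $\GG_{12}$ are the edges of the triangulation with endpoint colors $3,4$, and the cycles of $\GG_{34}$ are the edges with endpoint colors $1,2$.

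First I pass to the 3-dimensional subcomplex $N$ spanned by vertices of colors $1,2$ and their duals. The triangulation is a join of two pieces: the graph $K_{12}$ formed by edges of colors $1$--$2$, and the graph $K_{34}$ formed by edges of colors $3$--$4$. The manifold $M$ splits as a regular neighborhood of $K_{12}$ union a regular neighborhood of $K_{34}$, glued along a closed surface. Each tetrahedron meets this surface in a quadrilateral, so the surface is the square cellulation whose vertices are the tetrahedra. This is a Heegaard-type splitting, but the handlebodies are thickened graphs rather than standard ones.

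Rather than work with the graphs directly, I will use the dual handle decomposition. The $0$-handles are the $1$- and $2$-colored vertices of the triangulation. The $1$-handles are the dual disks to the $12$-edges, i.e.\ the cycles of $\GG_{34}$. Dually, it is cleaner to use a presentation whose generators are the $34$-edges, the cycles $x_1,\dots,x_{m+1}$ of $\GG_{12}$, as the statement asserts. Concretely:

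\begin{enumerate}
\item Contract the $3$- and $4$-colored vertices to get a single basepoint and collapse a maximal tree. This is where the choice $x_{m+1}=1$ enters: by Theorem~\ref{th: 3-gem-condition} the residues have matching counts, which makes the tree bookkeeping consistent.
\item The resulting $1$-skeleton contributes one loop per $34$-edge, giving generators $x_1,\dots,x_{m+1}$.
\item Each $12$-edge $e$ has a link disk formed by the tetrahedra around it. Its boundary runs through the consecutive tetrahedra $v_{k_1},\dots,v_{k_{2l}}$ of the $\GG_{34}$ cycle, and in each tetrahedron it crosses the $34$-edge $x_{k_i}$. The direction of crossing alternates, because consecutive tetrahedra have opposite vertex colors and hence opposite induced orientations. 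This produces the relator $r=x_{k_1}x_{k_2}^{-1}\cdots x_{k_{2l}}^{-1}$.
\item The $3$-cells (the $1$- and $2$-colored vertex stars) contribute no relations.
\item Since there are $m+1$ cycles in $\GG_{34}$ but only $m$ independent relations are claimed, one relator must be redundant. This is the usual consequence of $\chi(M)=0$: the product of all boundary words is trivial in the free group, since each surface square is traversed twice with opposite signs. Drop one relator, yielding \eqref{eq: pi_1(M)}.
\end{enumerate}

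The main obstacle is justifying the generator count, namely that exactly one generator can be killed. The $3$--$4$ colored graph $K_{34}$ is connected but not necessarily a tree, and $M$ is not literally a $2$-complex on a single vertex.

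I would handle this with van Kampen applied to the two thickened graphs. The neighborhood of $K_{34}$ is a handlebody $H$ whose $\pi_1$ is free. Gluing the remaining half along the surface attaches $2$-handles along the meridians of the $12$-edges, followed by $3$-handles. If $K_{34}$ has more structure than a single loop per edge, one first checks that its vertex set is effectively two points. The $3$-vertices and $4$-vertices each correspond to components of $\GG_4$ and $\GG_3$, respectively. If there are several of them, one replaces the $x_i$ by path-conjugated loops; this is the delicate sign and basepoint bookkeeping. In Lins' convention, one then shows that the relators are unchanged up to conjugation.

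If that bookkeeping becomes unwieldy, the fallback is to first reduce $\GG$ by dipole cancellations (which preserve $M$) to a gem with one $3$-vertex and one $4$-vertex. In that case $K_{34}$ is a bouquet on two vertices and the computation above is literal. One then checks that dipole moves transform the presentation by Tietze moves.
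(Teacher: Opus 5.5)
\textbf{Verdict: there is a genuine gap.} The paper does not prove this statement; it only cites Lins. As you are trying to prove it, for an arbitrary gem, the statement is false, so the step you single out as the main obstacle cannot be carried out. Taking all but one $x_i$ as generators and all but one of the $\GG_{34}$-words as relators is the classical presentation for \emph{contracted} gems (crystallizations). In these every $\GG_c$ is connected, i.e.\ the triangulation has exactly one vertex of each color. The paper is implicitly in that setting: the condition $g_{ij}=g_{kl}$ of Theorem~\ref{th: 3-gem-condition} also fails otherwise.

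Here is a counterexample. Take white vertices $a,c$ and black vertices $b,d$. Join $a$--$b$ and $c$--$d$ by edges of colors $2,3,4$, and add color-$1$ edges $a$--$d$ and $c$--$b$.
\begin{itemize}
\item $\GG_1$ consists of two 2-vertex spheres, and $\GG_2,\GG_3,\GG_4$ are connected 4-vertex spheres, so this is a gem.
\item The color-$1$ edge $a$--$d$ is a dipole, since its ends lie in different components of $\GG_1$. Cancelling it gives the 2-vertex gem, so $M\cong S^3$.
\item But $\GG_{12}$ has two components, $x_1=\{a,b\}$ and $x_2=\{c,d\}$, while $\GG_{34}$ is the single cycle $a,b,c,d$ with word $x_1x_1^{-1}x_2x_2^{-1}$. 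The recipe therefore returns $\mathbb{Z}$. Note also that $g_{12}=2\neq g_{34}=1$.
\end{itemize}
So no path-conjugation bookkeeping can make the vertex set ``effectively two points''. Your dipole fallback breaks for the same reason: cancelling this dipole changes the recipe's output from $\mathbb{Z}$ to the trivial group, and these are not related by Tietze moves.

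What holds for a general gem is a spanning-tree version. One must set to $1$ the $x_i$ along a whole spanning tree of the graph whose vertices are the components of $\GG_1$ and $\GG_2$ (here both $x_1$ and $x_2$). Relators may be discarded only along a spanning tree of the analogous graph built from $\GG_3$ and $\GG_4$. You need either to add the contractedness hypothesis or to prove this version.

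Under that hypothesis your steps 2--4 are the standard Heegaard-splitting argument, but two repairs are needed.

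\textbf{The colors are swapped.} With your convention, a component of $\GG_{12}$ is a $\{3,4\}$-cycle of tetrahedra glued along the faces opposite their $3$- and $4$-vertices. It therefore surrounds the edge joining the $1$- and $2$-colored vertices, and $3$-colored vertices correspond to components of $\GG_3$, not $\GG_4$. Consequently:
\begin{itemize}
\item the $x_i$ are the $12$-edges;
\item the tree you collapse lives in the $1$--$2$ graph, where it is a single edge in a crystallization;
\item each $\GG_{34}$-cycle gives the meridian of a $34$-edge, read as a closed path in the $1$--$2$ graph;
\item the exponents alternate because this path passes alternately through the $1$- and the $2$-colored vertex.
\end{itemize}

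\textbf{Step 5 is not justified.} $\chi(M)=0$ only makes the counts match; it never by itself produces a redundant relator. For example, the genus-one Heegaard splitting of $S^3$ gives the balanced presentation $\langle x\mid x\rangle$, from which nothing can be dropped. Also, each tetrahedron contributes exactly one letter to exactly one relator, so nothing is ``traversed twice''.

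The actual reason is that there is a single $3$-colored vertex $w$, and all $m+1$ of the $34$-edges end there. The sphere $\partial N(w)$, punctured where these edges leave it, is a connected planar piece of the Heegaard surface. Its boundary circles are exactly the $m+1$ relator curves. Hence a product of conjugates of all relators is trivial in the free fundamental group of the handlebody around the $1$--$2$ graph, and any one relator is a consequence of the others.
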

The usefulness of the fundamental group in our case lies in the fact that, in many cases, it characterizes the 3-manifold uniquely, based on the following theorem: 
\begin{theorem}[Theorem 2.2 of \cite{aschenbrenner20123}]
  \label{th: fun-group-classification}
  Let $M$,$N$ be two prime, aspherical manifolds with isomorphic fundamental groups $\pi_1(M)\cong\pi_1(N)$. Then $M\cong N$.
\end{theorem}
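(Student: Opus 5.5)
The proof I have in mind has two steps. First, asphericity turns the algebraic hypothesis into a homotopy equivalence. Second, the homotopy equivalence is upgraded to a homeomorphism using the structure theory of $3$-manifolds. This is the three-dimensional Borel conjecture, and I would not try to prove it from scratch; I would assemble it from classical rigidity theorems together with geometrization. Throughout, $M,N$ are closed, connected and orientable, as in all manifolds arising from gems in this paper.

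\emph{Step 1 (homotopy equivalence).} Since $M$ and $N$ are aspherical, each is a $K(\pi,1)$ with $\pi=\pi_1(M)\cong\pi_1(N)$. Maps between Eilenberg--MacLane spaces are classified up to homotopy by homomorphisms of fundamental groups. So the isomorphism $\phi:\pi_1(M)\to\pi_1(N)$ is realized by a map $f:M\to N$, and $\phi^{-1}$ by a map $g:N\to M$. The compositions $g\circ f$ and $f\circ g$ induce the identity on $\pi_1$, hence are homotopic to the identity. Thus $f$ is a homotopy equivalence. It remains to show that a homotopy equivalence between prime aspherical $3$-manifolds can be replaced by a homeomorphism.

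\emph{Step 2 (case analysis via geometrization).} A prime aspherical closed $3$-manifold has infinite fundamental group. By the geometrization theorem (Perelman), together with the JSJ decomposition, $M$ falls into one of three cases. Being Haken or Seifert fibered, and having geometric type, are detectable from $\pi_1$ (for example, Seifert fibered spaces are characterized by having an infinite cyclic normal subgroup, by the Seifert fibered space theorem). So $N$ falls into the same case as $M$.
\begin{itemize}
\item If $M$ contains an incompressible surface (the Haken case, which includes every manifold with nontrivial JSJ decomposition), then so does $N$. Waldhausen's theorem states that any homotopy equivalence between closed Haken manifolds is homotopic to a homeomorphism.
\item If $M$ is non-Haken and Seifert fibered with infinite $\pi_1$, Scott's theorem applies: closed Seifert fibered spaces with isomorphic infinite fundamental groups are homeomorphic.
\item Otherwise, by geometrization, $M$ and $N$ are closed hyperbolic. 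Mostow rigidity then gives an isometry inducing $\phi$, and in particular a homeomorphism.
\end{itemize}

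\emph{Main obstacle.} The genuinely hard input is geometrization. It is needed to rule out exotic non-Haken, non-Seifert aspherical manifolds that are not hyperbolic; without it one would only have the Gabai--Meyerhoff--Thurston rigidity result for manifolds already known to be homotopy equivalent to hyperbolic ones. A secondary subtlety is checking that the case split is intrinsic to $\pi_1$, so that $M$ and $N$ really land in the same case. For this I would rely on the characterizations quoted above, as surveyed in \cite{aschenbrenner20123}, and take them as given rather than reproving them.
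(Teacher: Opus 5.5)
Your proposal is correct and is essentially the argument behind the citation. The paper gives no proof of its own; in Aschenbrenner--Friedl--Wilton the result follows from geometrization together with Waldhausen's theorem (Haken case), Scott's theorem (Seifert fibered case) and Mostow rigidity (hyperbolic case), which is exactly your case split after the $K(\pi,1)$ step. One small caveat: you use the standard meaning of aspherical (contractible universal cover), whereas the paper loosely defines it as ``universal cover not $S^3$''; under that reading the only extra closed orientable prime case is $S^1\times S^2$, which is the unique such manifold with $\pi_1\cong\mathbb{Z}$, so the conclusion is unaffected.
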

A manifold is prime if it cannot be decomposed as the connected sum of two manifolds $M=M_1 \# M_2$ with $M_1,M_2\not\cong S^3$. A manifold is aspherical if its universal cover is not $S^3$. Note that both conditions can be checked from the fundamental group: if $M=M_1\# M_2$ then its fundamental group is the free product $\pi_1(M)=\pi_1(M_1)*\pi_1(M_2)$. If $M$ is spherical then $\pi_1(M)$ is a subgroup of $O(4)$. In particular, it is finite.

\subsection{Construction of locally-achiral 3-gems}
\begin{figure}
  \begin{center}
    \begin{tikzpicture}[manifold gem,every node/.append style={font={\tiny}}]
      \def\r{1}
      \def\rr{1.2}
      \def\d{1.5}
      \def\dd{1.5}
      \def\x{3.8}
      \def\y{2.2}
      \def\m{6}
      \def\mm{8}
      \def\mmm{10}

      \foreach \q in {1,3,...,\m}{
        \node[v1] (1-\q) at ({cos(360/6*(\q-2.5))*\r+\d},{sin(360/6*(\q-2.5))*\r}) {\q};
      }
      \foreach \q in {2,4,...,\m}{
        \node[v2] (1-\q) at ({cos(360/6*(\q-2.5))*\r+\d},{sin(360/6*(\q-2.5))*\r}) {\q};
      }
      \foreach \q in {1,3,...,\m}{
        \pgfmathtruncatemacro{\qq}{\q+1}
        \draw[e0] (1-\q) to (1-\qq);
        }
      \foreach \q in {3,5,...,\m}{
        \pgfmathtruncatemacro{\qq}{\q-1}
        \draw[e1] (1-\q) to (1-\qq);
        }
      \draw[e1] (1-1) to (1-\m);

      \foreach \q in {1,3,...,\mm}{
        \node[v1] (2-\q) at ({cos(360/8*(\q+.5))*\rr-\dd},{sin(360/8*(\q+.5))*\rr}) {\q};
      }
      \foreach \q in {2,4,...,\mm}{
        \node[v2] (2-\q) at ({cos(360/8*(\q+.5))*\rr-\dd},{sin(360/8*(\q+.5))*\rr}) {\q};
      }
      \foreach \q in {1,3,...,\mm}{
        \pgfmathtruncatemacro{\qq}{\q+1}
        \draw[e0] (2-\q) to (2-\qq);
        }
        \foreach \q in {3,5,...,\mm}{
        \pgfmathtruncatemacro{\qq}{\q-1}
        \draw[e1] (2-\q) to (2-\qq);
        }
      \draw[e1] (2-1) to (2-\mm);
      
      \foreach \q in {1,3,...,\mmm}{
        \node[v2] (3-\q) at ({cos(360/\mmm*(\q-2.5))*\x},{sin(360/\mmm*(\q-2.5))*\y}) {\q};
      }
      \foreach \q in {2,4,...,\mmm}{
        \node[v1] (3-\q) at ({cos(360/\mmm*(\q-2.5))*\x},{sin(360/\mmm*(\q-2.5))*\y}) {\q};
      }
      \foreach \q in {1,3,...,\mmm}{
        \pgfmathtruncatemacro{\qq}{\q+1}
        \draw[e0] (3-\q) to (3-\qq);
        }
        \foreach \q in {3,5,...,\mmm}{
        \pgfmathtruncatemacro{\qq}{\q-1}
        \draw[e1] (3-\q) to (3-\qq);
        }
      \draw[e1] (3-1) to (3-\mmm);

      \foreach \q in {1,...,4}
        \draw[e2] (1-\q) -- (3-\q);
      \foreach \q in {1,...,6}{
        \pgfmathtruncatemacro{\qq}{\q+4}
        \draw[e2] (2-\q) -- (3-\qq);
      }
      \foreach \q in {5,6}{
        \pgfmathtruncatemacro{\qq}{13-\q}
        \draw[e2] (1-\q) -- (2-\qq);
      }

      \draw[dashed] (-4.5,0) -- (4.5,0);

    \end{tikzpicture}
  \end{center}
  \caption{A gem of $S^2$ with three cycles in $\GG_{12}$ and $n_1,n_2,n_3=6,8,10$. The dashed line represents the reflection plane. }\label{fig: S2-gems-lemma}
\end{figure}

Here we present a general method for constructing locally-achiral gems of 3-manifolds. The method will allow for computer search over a space of locally-achiral gems of $N$ vertices. We will then use it to find locally-achiral gems for the manifolds $\fei n$. 

Our method is very similar to the one presented in \cite{Basak_2016}. The idea is to specialize to the case where $g_{12}=3$ (the case of $g_{12}=2$ covers only the Lens spaces). This case gives us more control and enables a simple way to verify that a gem is locally-achiral. The main tool will be the following lemma: 
\begin{lemma}
  \label{lem: s2-locally-achiral-condition}
  Let $\GG$ be a gem for $S^2$. Assume that $\GG_1$ has exactly three disconnected cycles and let $n_i$ $i=1,2,3$ denote the number of vertices in each cycle. Assume further that edges of color $1$ only connect between different cycles. Then $\GG$ is determined uniquely up to coloring. Furthermore, if $\abs{\GG}=4k$ for some integer $k$, then $\GG$ is reflection-positive.
\end{lemma}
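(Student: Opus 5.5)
We work with colors $0,1,2$ as in Fig.~\ref{fig: S2-gems-lemma}: $\GG_1$ consists of three bicolored ($0,1$-alternating) cycles $C_1,C_2,C_3$ of lengths $n_1,n_2,n_3$, and by hypothesis every edge of the remaining color (color $2$ in the figure) joins two different cycles. The plan is to use the embedding of $\GG$ in $S^2$ given by the gem structure and reduce everything to arcs in a pair of pants.

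\textbf{Step 1 (counting).} Let $e_{ij}$ be the number of connecting edges between $C_i$ and $C_j$. Every vertex carries exactly one such edge, so
\[
n_1=e_{12}+e_{13},\qquad n_2=e_{12}+e_{23},\qquad n_3=e_{13}+e_{23}.
\]
Solving gives $e_{ij}=(n_i+n_j-n_k)/2$, so the $e_{ij}$ are determined by the $n_i$. The bicolored cycles of $\GG$ are the faces of the cellular embedding. Since $\GG$ triangulates $S^2$, Euler's formula gives $|\GG|-\tfrac32|\GG|+F=2$, hence $F=2+|\GG|/2$. Three of these faces are the $C_i$, so the remaining faces (bicolored cycles using the connecting color) number $|\GG|/2-1$.

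\textbf{Step 2 (uniqueness).} Remove the open disks bounded by the three faces $C_i$; what remains is a pair of pants $P$. The connecting edges are disjoint properly embedded arcs in $P$, each joining two different boundary components, and every complementary region is a disk. I would argue as follows.
\begin{itemize}
\item Disjoint essential arcs in a pair of pants with endpoints on distinct boundaries fall into at most three parallel classes, of types $12$, $13$ and $23$.
\item The region between two consecutive parallel arcs is a disk whose boundary consists of these two arcs and two boundary segments. Each boundary segment must be a single edge, because every vertex carries an arc; so these regions are exactly the 4-cycles.
\item A count shows there are $|\GG|/2-3$ such rectangles. Hence exactly two further faces remain, the two hexagons separating the three classes.
\end{itemize}
Thus the embedded graph is determined by $(e_{12},e_{13},e_{23})$, and hence by the $n_i$. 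Which boundary edges receive color $0$ or $1$ is fixed once one edge is chosen, since colors alternate around each $C_i$. This gives uniqueness up to coloring. The main care needed is ruling out arcs parallel to a boundary or otherwise degenerate configurations; the hypothesis that connecting edges join different cycles handles this.

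\textbf{Step 3 (parity).} Each $n_i$ is even, since the $C_i$ are bipartite cycles. Also $|\GG|=n_1+n_2+n_3=2(e_{12}+e_{13}+e_{23})$, so $|\GG|=4k$ forces $e_{12}+e_{13}+e_{23}$ to be even. If exactly two of the $e_{ij}$ were odd, then some $n_i$, being the sum of one odd and one even $e_{ij}$, would be odd, a contradiction. Hence all $e_{ij}$ are even.

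\textbf{Step 4 (reflection).} The model pair of pants has a reflection that fixes each boundary circle setwise, preserves each parallel class, and exchanges the two hexagons (the dashed line in Fig.~\ref{fig: S2-gems-lemma}). On $C_i$ its axis passes through the middle of the block of $e_{ij}$ arc endpoints and the middle of the block of $e_{ik}$ endpoints.
\begin{itemize}
\item Since all $e_{ij}$ are even, these middles are midpoints of $0$/$1$-edges, not vertices. So no vertex is fixed, and each connecting edge is mapped to another connecting edge.
\item A reflection of an even cycle through edge midpoints preserves the alternating edge colors and exchanges the two vertex colors.
\end{itemize}
Cutting along the axis splits $\GG$ into two halves exchanged by this automorphism. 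This is exactly reflection-positivity.

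The step I expect to need the most care is Step 2, namely the classification of the arc system and the identification of the rectangles and hexagons. The parity argument and the construction of the reflection are then direct.
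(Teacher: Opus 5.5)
Your proposal is correct and follows essentially the paper's route. You count the edges between cycles, use the planar (here, cellular $S^2$) embedding to force contiguous, order-matched blocks, use parity to make all blocks even, and reflect through the block midpoints; your pair-of-pants arc classification with the Euler count is a more careful version of the paper's informal planarity argument.

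One small correction concerns the case $e_{ij}=0$, i.e.\ $n_k=n_i+n_j$, which does occur, e.g.\ $(n_1,n_2,n_3)=(4,2,2)$. There you get only $|\GG|/2-2$ rectangles and a single remaining octagonal face, which the reflection maps to itself, so the ``two hexagons'' description must be adjusted; uniqueness and reflection-positivity go through unchanged.
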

\begin{proof}
  Let $m_{ij}$ be the number of color 1 edges connecting between cycles $i,j$. We have $n_1=m_{12}+m_{13},n_2=m_{23}+m_{12},n_3=m_{13}+m_{23}$, which is inverted to give 
  \begin{equation}
    \begin{aligned}
      m_{12} &= \frac{1}{2}(n_1+n_2-n_3), & m_{13} &= \frac{1}{2}(n_1+n_3-n_2) ,& m_{23} &= \frac{1}{2}(n_2+n_3-n_1).
    \end{aligned}
    \label{eq: s2-m-from-n}
  \end{equation}
  Since $\GG$ is a gem of $S^2$ it is the dual graph of a triangulation of $S^2$ and is thus a planar graph. Begin by drawing the three $\GG_1$ cycles on the plane and label the vertices in cycle $i$ (see Fig. \ref{fig: S2-gems-lemma}a). Since the graph is planar, for each pair $ij$ of cycles, if color $1$ edges connect $v_{i,s}$ and $v_{i,s+t}$ to cycle $j$, then either all of $v_{i,s+1},...,v_{i,s+t-1}$ are connected to cycle $j$, or all of $v_{i,1},...,v_{i,s-1},v_{i,s+t+1},...,v_{i,n_i}$. Furthermore, consecutive vertices of cycle $i$ should be connected to consecutive vertices of cycle $j$. We therefore get a unique way of connecting color 1 (up to relabeling): connect the vertices $v_{1,1},...,v_{1,m_{12}}$ to $v_{2,1}, ... v_{2,m_12}$, the vertices $v_{1,m_{12}+1},...,v_{1,n_1}$ to $v_{3,1},...,v_{3,m_{13}}$, and the vertices $v_{2,m_{12}+1},...,v_{2,n_2}$ to $v_{3,n_3},...,v_{3,m_{13}+1}$ (note the inverted order in the last sequence), see Fig. \ref{fig: S2-gems-lemma}c. This shows the uniqueness of $\GG$.

  If $\abs{\GG}=0\mod4$ then, since $n_i$ are even, by \eqref{eq: s2-m-from-n}, $m_i$ are also even (since $n_i+n_j-n_k=n_i+n_j+n_k\mod 4$). We can therefore make the cut on the edges connecting $(x_{1,m_{12}/2}$, $x_{1,m_{12}/2+1})$, $(x_{2,m_{12}/2},x_{2,m_{12}/2+1})$, $(x_{3,m_{13}/2}, x_{3,m_{13}/2+1})$, $(x_{1,n_1-m_{13}/2-1},x_{1,n_1-m_{12}/2})$, $(x_{2,n_2-m_{23}/2-1},x_{2,n_2-m_{23}/2})$, $(x_{3,n_3-m_{23}/2-1},x_{3,n_3-m_{23}/2})$.
\end{proof}

For 3-gems with $g_{12}=3$, since the space of possible $\GG_{1}$ is highly constrained by the previous lemma, for a given number of vertices, the space of possible 3-gems satisfying this condition has polynomially-bounded size. This suggests the following algorithm for listing locally-achiral 3-gems. Fix $n_1,n_2,n_3$ with $n_1+n_2+n_3=4k$, and draw the unique graph for $\GG_1$. We then want to add the color 1 edges to the graph (note that color-1 edges in the proof of the lemma become color 2 here). Color-1 edges should connect the three cycles of $\GG_{12}$. As above, they should connect consecutive vertices of cycle $i$ to consecutive edges of cycle $j$ to ensure that $\GG_2$ is planar. Therefore, taking three integers $q_1,q_2,q_3$, we connect the vertices $v_{1,1+q_1},...,v_{1,m_{12}+q_1}$ to $v_{2,1+q_2}, ... v_{2,m_12+q_2}$, the vertices $v_{1,m_{12}+1+q_1},...,v_{1,n_1}$ to $v_{3,1+q_3},...,v_{3,m_{13}+q_3}$, and the vertices $v_{2,m_{12}+1+q_2},...,v_{2,n_2+q_2}$ to $v_{3,n_3+q_3},...,v_{3,m_{13}+1+q_3}$ (the second index of $v_{i,j}$ should be taken mod $n_i$). The graph is bipartite if and only if $q_i$ are either all even or all odd. Furthermore, let 3 be the color of the edge connecting $v_{1,n_1},v_{1,1}$, we have $g_{23}=k-1,g_{24}=k$. To ensure $g_{14}=g_{23}, g_{13}=g_{24}$ we need to choose $q_i$ to be all odd. Finally, by Theorem \ref{th: 3-gem-condition}, the resulting graph is a gem if and only if $g_{34}=g_{12}=3$. This condition can be checked given a choice of $q_i$. Thus, for every choice of odd $q_i$ such that $g_{34}=3$ we get a valid gem. Furthermore, by lemma \ref{lem: s2-locally-achiral-condition} we are promised that this gem is locally achiral. 

Using the above procedure, we can now iterate over $n_1,n_2,n_3,q_1,q_2,q_3$ and list the resulting 3-manifolds. We implemented this procedure and compared the resulting 3-manifolds with those in the 3-manifold census in Regina \cite{regina}. The resulting list reveals a rich landscape of manifolds obtained this way and allowed for identifying certain examples of interest. Below, we show how the manifolds $\fei n$ for $n\ge 5$ can be obtained.

\subsection{Locally-achiral gems for \texorpdfstring{$\fei n$}{F8\_n}}
\begin{figure}
  \begin{center}
    \begin{tikzpicture}[manifold gem,every node/.append style={font={\tiny}}]
      \def\r{1.2}
      \def\rr{2.0}
      \def\d{3.0}
      \def\dd{2.5}
      \def\x{6.2}
      \def\y{3.8}
      \def\m{14}
      \def\mm{18}
      \def\mmm{20}

      \foreach \q in {1,3,...,\m}{
        \node[v1] (1-\q) at ({cos(360/\m*(\q-4.5))*\r+\d},{sin(360/\m*(\q-4.5))*\r}) {\q};
      }
      \foreach \q in {2,4,...,\m}{
        \node[v2] (1-\q) at ({cos(360/\m*(\q-4.5))*\r+\d},{sin(360/\m*(\q-4.5))*\r}) {\q};
      }
      \foreach \q in {1,3,...,\m}{
        \pgfmathtruncatemacro{\qq}{\q+1}
        \draw[e0] (1-\q) to (1-\qq);
        }
      \foreach \q in {3,5,...,\m}{
        \pgfmathtruncatemacro{\qq}{\q-1}
        \draw[e1] (1-\q) to (1-\qq);
        }
      \draw[e1] (1-1) to (1-\m);

      \foreach \q in {1,3,...,\mm}{
        \node[v1] (2-\q) at ({cos(360/\mm*(\q+5.5))*\rr-\dd},{sin(360/\mm*(\q+5.5))*\rr}) {\q};
      }
      \foreach \q in {2,4,...,\mm}{
        \node[v2] (2-\q) at ({cos(360/\mm*(\q+5.5))*\rr-\dd},{sin(360/\mm*(\q+5.5))*\rr}) {\q};
      }
      \foreach \q in {1,3,...,\mm}{
        \pgfmathtruncatemacro{\qq}{\q+1}
        \draw[e0] (2-\q) to (2-\qq);
        }
        \foreach \q in {3,5,...,\mm}{
        \pgfmathtruncatemacro{\qq}{\q-1}
        \draw[e1] (2-\q) to (2-\qq);
        }
      \draw[e1] (2-1) to (2-\mm);
      
      \foreach \q in {1,3,...,\mmm}{
        \node[v2] (3-\q) at ({cos(360/\mmm*(\q-2.5))*\x},{sin(360/\mmm*(\q-2.5))*\y}) {\q};
      }
      \foreach \q in {2,4,...,\mmm}{
        \node[v1] (3-\q) at ({cos(360/\mmm*(\q-2.5))*\x},{sin(360/\mmm*(\q-2.5))*\y}) {\q};
      }
      \foreach \q in {1,3,...,\mmm}{
        \pgfmathtruncatemacro{\qq}{\q+1}
        \draw[e0] (3-\q) to (3-\qq);
        }
        \foreach \q in {3,5,...,\mmm}{
        \pgfmathtruncatemacro{\qq}{\q-1}
        \draw[e1] (3-\q) to (3-\qq);
        }
      \draw[e1] (3-1) to (3-\mmm);

      \foreach \q in {1,...,8}
        \draw[e2] (1-\q) -- (3-\q);
      \foreach \q in {1,...,12}{
        \pgfmathtruncatemacro{\qq}{\q+8}
        \draw[e2] (2-\q) -- (3-\qq);
      }
      \foreach \q in {12,...,14}{
        \pgfmathtruncatemacro{\qq}{27-\q}
        \draw[e2] (1-\q) -- (2-\qq);
      }
      \draw[e2] (1-9) to[bend right=35] (2-18);
      \draw[e2] (1-10) to[bend right=20] (2-17);
      \draw[e2] (1-11) to[bend right=10] (2-16);
      \begin{scope}[every path/.style={e3}]
        \draw (3-14) to[out=-40, in=-120,looseness=.9] (1-14);
        \draw (3-15) to[out=-20, in=-120,looseness=.9] (1-1);
        \draw (3-16) to[out=-10, in=-120,looseness=.9] (1-2);
        \draw (3-17) to[out=-00, in=-110,looseness=.8] (1-3);
        \draw (3-18) to[out=15, in=-90,looseness=.9] (1-4);
        \draw (3-19) to[out=35, in=-70] (1-5);
        \draw (3-20) to[out=45, in=-45] (1-6);
        \draw (3-1) to[out=75, in=-20] (1-7);
        \draw (3-2) to[out=135, in=40,looseness=2.0] (2-10);
        \draw (3-3) to[out=140,in=50,looseness=1.8] (2-11);
        \draw (3-4) to[out=140,in=60,looseness=1.4] (2-12);
        \draw (3-5) to[bend right=55] (2-13);
        \draw (3-6) to[bend right=55] (2-14);
        \draw (3-7) to[bend right=55,looseness=1.2] (2-15);
        \draw (3-8) to[bend right=45] (2-16);
        \draw (3-9) to[bend right=25] (2-17);
        \draw (3-10) to (2-18);
        \draw (3-11) to (2-1);
        \draw (3-12) to (2-2);
        \draw (3-13) to (2-3);
        \draw (1-8) to[out=170,in=10,looseness=.9] (2-9);
        \draw (1-9) to[out=200,in=-10,looseness=1.2] (2-8);
        \draw (1-10) to[out=220,in=-20,looseness=1.4] (2-7);
        \draw (1-11) to[out=-130,in=-50,looseness=1.3] (2-6);
        \draw (1-12) to[out=-120,in=-70,looseness=1.25] (2-5);
        \draw (1-13) to[out=-120,in=-100,looseness=1.25] (2-4);
      \end{scope}
      \draw[very thick,darkblue,opacity=.6] (-3.2,1.7) -- (-5.7,2.1) -- (-6.5,0.2) -- (-4.0, 0.6) -- cycle;
      \node[darkblue,align=center,font={\scriptsize}] at (-6.1,2.4) {repeat $n-5$ \\times};

    \end{tikzpicture}
  \end{center}
  \caption{The gems for $\fei n$. The motif encircled in blue should be repeated $n-5$ times consecutively (here for $n=6$).}\label{fig: f8n-gems} \end{figure}

Through the search procedure above, we obtained the gems for $\fei n$, drawn in Fig. \ref{fig: f8n-gems}. They have $28+4n$ vertices and thus correspond to multi-entropy measures on $14+2n$ replicas. For $n=5$, we find the multi-entropy measure detecting the $G_{5,11}$ MS examples using 24 replicas. Since the complexity of evaluating $\arg\MM$ scales exponentially with the number of replicas, evaluating this entanglement measure remains many orders of magnitude beyond the reach of any practical method.

\begin{proposition}
  The gems drawn in Fig. \ref{fig: f8n-gems} correspond to the manifolds $\fei n$ for $n\ge 5$.
\end{proposition}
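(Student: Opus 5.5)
We will identify the manifold encoded by each gem through its fundamental group, and then invoke Theorem \ref{th: fun-group-classification} to conclude that the manifold is $\fei n$. First we check that the graph in Fig. \ref{fig: f8n-gems} is indeed a gem. By construction, $\GG_{12}$ consists of the three cycles of sizes $n_1=14$, $n_2=18+2(n-5)$ and $n_3=20+2(n-5)$ (the repeated motif lengthens the cycles), so $g_{12}=3$. By the analysis of the previous subsection, $g_{23}=k-1$ and $g_{24}=k$ with $4k=28+4n$. These already match $g_{14}$ and $g_{13}$ for odd shifts $q_i$. By Theorem \ref{th: 3-gem-condition}, it then remains to count the components of $\GG_{34}$ and verify $g_{34}=3$. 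This is a direct trace of the alternating color-1/color-2 cycles. The repeated motif only lengthens existing cycles and does not create new ones, so the count can be done for $n=5$ and then propagated by induction on $n$.

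Next we apply Theorem \ref{th: fun-group-from-gem}. The generators are two of the three $\GG_{12}$ cycles, say $x_1,x_2$, with $x_3=1$. The two relations come from tracing two of the three $\GG_{34}$ cycles and recording which $\GG_{12}$ cycle each visited vertex lies in, with alternating exponents. The motif inserts a fixed word into one relation each time it is repeated, so we expect presentations of the form
\begin{equation}
  \pi_1 = \left\langle x_1,x_2 \,\middle|\, r_1(x_1,x_2),\; w\, u^{n-5} = 1 \right\rangle
\end{equation}
for fixed words $r_1,w,u$. We then compare this with the standard presentation of $\pi_1(\fei n)$ obtained from the Wirtinger presentation of the figure-8 knot group, $\langle a,b \mid a\,[b^{-1},a] = [b^{-1},a]\,b\rangle$, by adding the surgery relation $\mu\lambda^n=1$. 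Here $\mu$ is the meridian and $\lambda$ the longitude, $\lambda = [b^{-1},a][b,a^{-1}]$ up to conventions. We expect to write down explicit substitutions, with $x_1,x_2$ given as words in $a,b$ and conversely, that carry one presentation to the other via Tietze transformations uniformly in $n$. The main obstacle is this step: guessing the correct substitution and handling the sign and orientation convention on $n$ (whether the gem gives $\fei n$ or $\fei{-n}$). To find the substitution, we would first compute the $n=5,6$ cases by computer (e.g.\ with Regina or GAP), and then read off the pattern in how $u$ relates to $\lambda$. The figure-8 knot is amphichiral, so $\fei{-n}\cong-\fei n$, and the orientation ambiguity does not affect the identification of the unoriented manifold.

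Finally, we verify the hypotheses of Theorem \ref{th: fun-group-classification} for $\fei n$ with $n\ge5$. By Thurston's hyperbolic Dehn surgery theorem, together with the known list of exceptional surgeries on the figure-8 knot (slopes $0,\pm1,\pm2,\pm3,\pm4$), $\fei n$ is hyperbolic for $\abs{n}\ge5$. Hyperbolic manifolds are prime and aspherical, and the gem manifold has an isomorphic fundamental group. This group is infinite and freely indecomposable, so by the remarks after the theorem the gem manifold is neither a connected sum nor spherical. Aspherical in the full sense follows from the sphere theorem together with the infinite, one-ended group. Theorem \ref{th: fun-group-classification} then gives the homeomorphism. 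The restriction $n\ge5$ is exactly what makes this step work, which is consistent with the statement.
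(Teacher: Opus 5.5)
\textbf{There is a genuine gap.} Your route is the paper's: check $g_{12}=g_{34}=3$, read a presentation off the gem with Theorem \ref{th: fun-group-from-gem}, match it with $\pi_1(\fei n)$, then apply Theorem \ref{th: fun-group-classification}. Your prime/aspherical step via hyperbolic Dehn surgery is sound, and more careful than the paper's one-line justification. The gap is at the decisive step. You never produce the presentation from the gem or the isomorphism, and the target group you set up for the comparison is the wrong one.

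For integer surgery with coefficient $n$, the Dehn filling kills $\mu^n\lambda$, not $\mu\lambda^n$; the latter is $1/n$ surgery. Since $\lambda$ is null-homologous in the knot complement, $\mu\lambda^n=1$ forces $H_1=0$. You would therefore be comparing against a homology sphere, whereas $H_1(\fei n)\cong\Z_n$. No substitution can carry the gem group onto that group. Your heuristic of matching the repeated word $u$ to the longitude also points the wrong way: the motif inserts powers of the meridian.

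Concretely, the paper reads off
$\pi_1(M_n)=\langle x,y\mid y^{n-2}x^{-1}yx^{2}yx^{-1},\ y^{-1}x^{-1}yxyxyx^{-1}y^{-1}x\rangle$, whose abelianization is $\Z_n$. It matches this with
$\langle a,b\mid b^{-1}a^{-1}bab^{-1}aba^{-1}b^{-1}a,\ a^{n}ba^{-1}b^{-1}a^{2}b^{-1}a^{-1}b\rangle$, whose second relator is $m^n l$, via $a=y$, $b=yx^{-1}yxy^{-1}$. So the meridian is $y$, and the $n-5$ copies of the motif account for the growing power $y^{n-2}$. Without this explicit computation, or an equivalent verified one, your argument does not establish the proposition.

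Two smaller slips:
\begin{itemize}
\item Your cycle sizes $14$, $18+2(n-5)$, $20+2(n-5)$ sum to $32+4n$. The figure shows the $n=6$ gem and the family has $28+4n$ vertices, so your base case is off by one copy of the motif.
\item The classification step does not need $n\ge5$. The figure-eight knot has no reducible or finite surgeries, so every integral surgery on it is irreducible with infinite $\pi_1$. The restriction $n\ge5$ comes from the gem family, in which the motif is repeated $n-5$ times.
\end{itemize}
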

\begin{proof}
  \newcommand{\ii}{^{-1}}
  A quick inspection reveals that $g_{12}=g_{34}=3$, ensuring that these are indeed gems using Theorem \ref{th: 3-gem-condition}. Denote the resulting manifolds by $M_n$. Let us calculate the fundamental group using Theorem \ref{th: fun-group-from-gem}, choosing the generators to be two of the three cycles in $\GG_{12}$ (the inner circles in Fig. \ref{fig: f8n-gems}). We have 
  \begin{equation}
    \pi_1(M_n) = \left\langle x,y | y^{n-2} x\ii y x^2 y x\ii=1, y\ii x\ii y x y x y x\ii y\ii x=1\right\rangle. 
    \label{eq: gem-fun-group}
  \end{equation}
  On the other hand, the fundamental groups of $\fei n$ are given by 
  \begin{equation}
    \pi_1(\fei n) = \left\langle a,b| b\ii a\ii b a b\ii a b a\ii b\ii a = 1, a^n b a\ii b\ii a^2 b\ii a\ii b=1 \right\rangle .
  \end{equation}
  The first relation above is obtained from the presentation of the complement of the figure-8 knot in $S^3$ (see e.g. \cite{rolfsen2003knots}). The second relation is the result of the Dehn filling, requiring that $m^n l=1$ where $m,l$ are the elements of the knot fundamental group corresponding to the meridian and longitudinal lines of the boundary. 

  The two groups are isomorphic with the mapping $a=y, b=yx\ii y x y\ii$. The manifolds are prime since the fundamental group is not the free product of two groups, and aspherical since the fundamental group is infinite (and thus cannot be a discrete subgroup of $O(4)$). By Theorem \ref{th: fun-group-classification}, since $\fei n$ is prime and aspherical, we have $M_n \cong \fei n$.
\end{proof}
An important note is that, as oriented manifolds, we cannot learn from the above proof whether $M_n\cong \fei n$ or $M_n\cong \overline{\fei n}$. In the context of detecting the MS examples, this would mean that the resulting multi-entropy measure could distinguish between $D(G_{5,11},u=1)$ and $D(G_{5,11},u=4)$, for example, but we could not tell a priori which is which. 


\section{Four-dimensional probes: obstructions to locally achiral manifolds and SPTs}
\label{sec: 4-manifolds}
\subsection{Vanishing of the Pontryagin numbers}
While the previous section presented a method of obtaining many locally-achiral gems, an important question that remains is whether any triangulated manifold can be represented in such a way. While the question remains completely open in three dimensions, we can say more in the four-dimensional case. 

To motivate the result below, we begin with a physical argument. In 3+1D, there exists a time-reversal symmetry protected topological order (T-SPT), whose action is given by the gravitational theta angle 
\begin{equation}
  Z(M)=e^{i\theta \int_M p_1}
  \label{eq: p_1}
\end{equation}
with $\theta=\pi$ and where
\begin{equation}
  \int_M p_1=-\frac{1}{8\pi^2}\int_M \Omega_{\mu\nu}\Omega^{\mu\nu}\in \Z
  \label{eq: g-theta}
\end{equation}
is the Pontryagin number of $M$. Here $\Omega_{\mu\nu}=\frac{1}{2}R_{\mu\nu\rho\sigma}dx^\mu dx_\nu$ where $R_{\mu\nu\rho\sigma}$ is the Riemann curvature tensor \cite{Nakahara_1990}. Importantly, in contrast with ``in cohomology" T-SPT phases, the action can be non-trivial on orientable manifolds \cite{Kapustin_2014}. For example, for the complex projective plane $\CC P^2$ we have $Z(\CC P^2)=-1$. The fact that it is a T-SPT follows since, in the absence of time-reversal symmetry, the variable $\theta$ can flow continuously, so we can adiabatically connect it to the trivial phase. In the presence of time-reversal symmetry we have $\theta=0,\pi$, and the latter represents then nontrivial SPT phase. A lattice model for this T-SPT is given by the ``Three-fermion Walker-Wang" (3FWW) model \cite{Walker_Wang_2012, Burnell_Chen_Fidkowski_Vishwanath_2014, Haah_Fidkowski_Hastings_2023}.

Assume now that there is a locally-achiral gem representing $\CC P^2$. The physical arguments in Sec. \ref{sec: multi-ent} imply that we could construct a multi-entropy measure, invariant under any ``generic" low-depth circuit (or adiabatic evolution), such that $\mathcal{M}(\y)\propto -1$, where $\ket\y$ is the 3FWW model. This would contradict the fact that, since it is a T-SPT model, it can be obtained by adiabatic evolution from the trivial state, for which $\mathcal{M}=1$ (this evolution will break time-reversal symmetry along the path). While the arguments above are not rigorous, they strongly suggest that $\CC P^2$ is not locally-achiral.

Beyond four dimensions, in any dimension $D=0\mod 4$, similar T-SPT phases could be defined, whose action is $\int_M p_{i_1}\cdots p_{i_k}\mod 2$ where $p_i$ are the Pontryagin classes, given as polynomials of degree $2i$ in $\Omega_{\mu\nu}$. The same argument implies the following conjecture:

\begin{conjecture}
  Let $M$ be a locally-achiral smooth manifold of dimension $D=0\mod4$. Then all Pontryagin numbers of $M$ vanish.
\end{conjecture}
While we are not able to prove this conjecture in full generality, we can show that it is true for $D=4$. Note that the SPT argument given above results in a slightly weaker form of the conjecture, namely that for any locally-achiral manifold the Pontryagin numbers should vanish mod 2.
\subsection{Four-dimensional manifolds}
We now prove the following 
\begin{theorem}
  Let $M$ be a smooth 4-manifold with a locally-achiral gem. Then $\int_M p_1=0$.
\end{theorem}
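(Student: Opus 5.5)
The plan is to reduce the statement to a \emph{local formula} for the first Pontryagin class and then use local achirality to make every local contribution vanish. By the Hirzebruch signature theorem, $\int_M p_1 = 3\sigma(M)$. By results of Thom and Rokhlin--Schwarz, the rational Pontryagin number of a smooth $4$-manifold coincides with the combinatorial one of any PL triangulation. It therefore suffices to show that $\int_M p_1=0$ for the PL manifold encoded by a locally-achiral gem $\GG$.

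For PL manifolds I would invoke Gaifullin's local combinatorial formula for $p_1$. For a combinatorial $4$-manifold $K$,
\begin{equation}
  \int_K p_1 = \sum_{v\in K} f(\mathrm{lk}\, v),
\end{equation}
where $f$ is a rational-valued function on oriented combinatorial $3$-spheres. The key property I need is that $f$ is \emph{odd}: $f(\overline{L})=-f(L)$. This holds because reversing the orientation of $K$ flips the sign of $p_1$, and Gaifullin's construction is natural in the orientation. Consequently, if a link $L$ admits an orientation-reversing simplicial automorphism, then $f(L)=f(\overline{L})=-f(L)$, so $f(L)=0$. The proof then amounts to showing that every vertex link of a suitable simplicial triangulation of $M$ is achiral in this sense.

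The gem only gives a pseudo-triangulation: a colored complex in which simplices may share several faces. I would pass to the second barycentric subdivision $K=\GG''$, which is a genuine combinatorial triangulation of $M$. Each vertex of $K$ is the barycenter of some cell $\tau$ of the first subdivision. That cell comes from a simplex $\sigma$ of the pseudo-triangulation, and its link in $K$ is a subdivided join of the form $\partial\sigma * \mathrm{lk}(\sigma)$, suitably iterated. There are two cases.
\begin{enumerate}
  \item If $\sigma$ is a vertex of color $i$, its link is (a subdivision of) the $3$-sphere dual to the residue $\GG_i$. Local achirality says $\GG_i$ is RP, so in particular it has an automorphism preserving edge colors and exchanging vertex colors. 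By the footnote in Sec.~2.2, this is an orientation-reversing involution of the link. Barycentric subdivision is canonical, so the involution lifts to the subdivided link.
  \item If $\dim\sigma\ge1$, then $\partial\sigma$ factors into the join, and the transposition of two vertices of $\sigma$ reverses the orientation of $\partial\sigma$. Extending it by the identity on the other join factor gives an orientation-reversing automorphism of the whole link, again compatible with subdivision.
\end{enumerate}
The same bookkeeping applies to the vertices created by the second subdivision. Their links are joins involving the boundary of a positive-dimensional simplex (handled as in case 2), or the subdivided link of an original vertex (handled as in case 1). Hence every vertex link of $K$ is achiral, every term $f(\mathrm{lk}\,v)$ vanishes, and $\int_M p_1=0$.

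I expect the main obstacle to be the interface with Gaifullin's formula. I need to check that his $f$ is genuinely odd as a function of the oriented link, rather than only after summing over $K$. I also need to track carefully that the automorphisms are simplicial automorphisms of the \emph{subdivided} links and not just of $\GG_i$. If oddness of $f$ is not directly available, my fallback is to symmetrize: replace $f$ by $\tfrac12\bigl(f(L)-f(\overline{L})\bigr)$, which is still a valid local formula because $\int p_1$ is itself odd under orientation reversal.
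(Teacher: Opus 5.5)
Your proof is correct, but it takes a different route from the paper's.

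\textbf{The paper's route.} The paper argues differential-geometrically. It builds a ``natural'' metric by smoothing the flat polyhedral metric with Lange's equivariant smoothing. This metric is invariant under the symmetries of each vertex star, and its Pontryagin form is supported in small balls $U_\epsilon(v)$. Hence $\int_M p_1=\sum_v\int_{U_\epsilon(v)}p_1$, and each term is negated by pulling back along the orientation-reversing isometry of the star.

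\textbf{Your route.} You localize combinatorially with Gaifullin's formula instead. Your key observation is that, after barycentric subdivision, every barycenter of a positive-dimensional cell has a link $\partial\sigma*(\cdots)$, and a transposition makes that link achiral. This is the exact counterpart of the paper's claim that curvature vanishes in the product neighborhoods away from the vertices. It is also more rigorous, since translation invariance along the cell directions does not by itself force the Pontryagin form to vanish pointwise. Your symmetrization fallback makes the oddness of $f$ a non-issue, and the signature theorem removes any smooth-versus-PL ambiguity.

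\textbf{Trade-offs.} You import a deep black box, and you lose the curvature-at-vertices picture that mirrors the physical ``local contributions''. In return, you never need a symmetry-compatible smoothing, which is exactly what confines the paper to $D=4$. Levitt--Rourke and Gaifullin give local formulas for every polynomial in the rational Pontryagin classes. So your argument should extend essentially verbatim to all Pontryagin numbers in dimension $4k$, at least for gems whose residues are PL spheres. That is precisely the avenue the paper mentions but does not pursue.

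\textbf{A point you share with the paper.} Reflection positivity of $\GG_i$ gives an automorphism of the whole residue, and that automorphism may permute its connected components, i.e.\ distinct color-$i$ vertices. So an individual vertex link is only guaranteed to be the mirror of another link, not achiral itself. In your framework this is harmless. Along each orbit of components the values $f(\mathrm{lk}\,v)$ alternate in sign, so they vanish on odd orbits and cancel on even ones.
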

Since $\int p_1$ is independent of the choice of metric, we can choose an appropriate metric such that the integral is easy to calculate. For a triangulated manifold, we recall that the star $\str(v)$ of a vertex $v$ is the union of all tetrahedra adjacent to $v$. Topologically, the star is a 4-ball, and the requirement that $M$ is locally-achiral translates to the requirement that all stars have an orientation-reversing symmetry. We would want a metric for $M$ that respects this symmetry.
\begin{definition}
  A metric $g$ on a triangulated smooth 4-manifold $M$ is \textup{natural} if the following two conditions hold
  \begin{enumerate}
    \item For any vertex $v$, $g|_{\str(v)}$ is symmetric under all symmetries of $\str(v)$.
    \item There exist arbitrarily small neighborhoods $U_\ep(v)$ of the vertices, such that the form $\Omega_{\mu\nu}\Omega^{\mu\nu}$ vanishes outside $\bigcup_v U_\epsilon(v)$.
  \end{enumerate}
\end{definition}
Assuming that $M$ has a locally-achiral gem with a natural metric, the proof follows by analyzing the metric around the vertices. For each vertex $v$, let $\phi_v$ be the orientation-reversing symmetry for $U_\ep(v)$. We have 
\begin{equation}
  \int_M p_1 =\sum_v \int_{U_\ep(v)} p_1=\sum_v \int_{U_\ep(v)} \phi_v^*p_1=-\sum_v \int_{U_\ep(v)} p_1.
\end{equation}
In the first equality, we used the fact that $\phi_v$ is an isometry so $\phi_vp_1=p_1$. In the second, we changed variables in the integral and used the fact that $\phi_v$ is orientation-reversing. This shows that $\int p_1=0$. 

To conclude the proof, we need to show that such a metric exists.
\begin{lemma}
  \label{lem: natural-metric}
  Let $M$ be a triangulated 4-manifold. Then $M$ has a natural metric.
\end{lemma}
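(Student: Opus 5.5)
We will construct the metric in three stages, using the combinatorial structure of the triangulation: first choose a flat metric on each simplex, then make the metric flat away from the vertices, and finally modify it inside small balls around the vertices in a way that respects the symmetries of each star.

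\textbf{Step 1 (flat metric on simplices).} Give every 4-simplex the metric of a regular Euclidean simplex with unit edge length. This is a piecewise-flat metric on $M$. It is invariant under every combinatorial automorphism of each $\str(v)$, because such an automorphism permutes the simplices of the star and so acts by isometries of regular simplices. Curvature of this singular metric is concentrated on the codimension-2 skeleton, which here is the set of triangles, where it appears as deficit angles.

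\textbf{Step 2 (smoothing away from the vertices).} Next we smooth the metric near the open edges and triangles, so that $\Omega_{\mu\nu}\Omega^{\mu\nu}$ vanishes away from small vertex neighborhoods.

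Near the interior of a triangle $t$, the metric is locally a product $t\times C_\alpha$, where $C_\alpha$ is a 2-dimensional cone with angle $\alpha$. We replace the cone by a smooth rotationally symmetric cap, which is flat outside a small disk. The result is a product of a flat factor with a 2-dimensional surface. For such a product the curvature 2-form has only one nonzero component, $\Omega_{12}$, so $\Omega\wedge\Omega=0$ and the Pontryagin density vanishes pointwise. This is the key point: $p_1$ is a 4-form, so it vanishes wherever the metric locally splits off a flat factor of dimension at least 1. In fact a flat $\R^2$ factor suffices, and near an open edge we will use a flat $\R^1$ factor.

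Near an open edge $e$, the metric is locally $e\times \mathrm{Cone}(L_e)$, where $L_e$ is the link of $e$, a triangulated 2-sphere carrying a spherical polyhedral metric. We smooth the cone over $L_e$ to a smooth 3-dimensional metric, and then take the product with the interval $e$. The result again splits off a flat $\R^1$ factor, so $p_1=0$ there.

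We must also make the smoothings on the triangles compatible with the smoothing along the edges. To do this we smooth in order of increasing distance from the lower skeleton: first the cones transverse to triangles, done uniformly along each triangle away from its edges, and then the edge neighborhoods, where the link $L_e$ is already smoothed near its vertices (these correspond to triangles through $e$). Throughout, each local smoothing depends only on the combinatorics of the local link. Hence any combinatorial symmetry of $\str(v)$ carries one smoothing to another, and the construction remains invariant under these symmetries.

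\textbf{Step 3 (vertex neighborhoods).} Inside $U_\epsilon(v)$, the ball of radius $\epsilon$ around $v$, the metric from Step 2 is a cone over the link of $v$ carrying the smoothed metric. We smooth the cone point by the standard interpolation $dr^2+f(r)^2 g_{\mathrm{link}}$, with $f$ chosen so that the metric is smooth at $r=0$ (for instance by interpolating to a round metric near $r=0$). This is invariant under all isometries of the link, and in particular under all symmetries of $\str(v)$. Nothing is required of the curvature inside $U_\epsilon(v)$. Shrinking $\epsilon$, together with the Step 2 smoothing scales, gives arbitrarily small neighborhoods, which is condition 2 of the definition of a natural metric. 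Condition 1 holds by the invariance maintained at every step.

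\textbf{Main obstacle.} The hardest part will be Step 2: making the successive smoothings along triangles and edges consistent and genuinely smooth where the regions overlap, while keeping a local product structure so that $p_1$ vanishes pointwise. I would try to organize this with a stratified, "rounded-corner" tubular construction, in which the smoothing parameters are chosen in decreasing order of scale from the triangles down to the vertices.
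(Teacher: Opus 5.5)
\textbf{There is a genuine gap: condition~1 (equivariance) does not follow from your construction.} Your mechanism for the vanishing of $p_1$ is sound. Where the metric splits off a flat factor, $\Omega$ has no components along that factor, so $\mathrm{tr}\,\Omega\wedge\Omega$ vanishes pointwise; this is the same mechanism as the paper's observation that $\partial_v g=0$ in the product charts $V_x\times S_x$. The failure is in condition~1, which the paper does not prove by hand but imports from Lange's equivariant smoothing theorem \cite{Lange_2017}.

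In Step~3 you assert that $dr^2+f(r)^2g_{\rm link}$, with $g_{\rm link}$ interpolated to a round metric near $r=0$, ``is invariant under all isometries of the link.'' This is false in general. The interpolation is invariant under the symmetry group $G_v$ of $\str(v)$ only if the round metric you interpolate to is itself $G_v$-invariant. A $G_v$-invariant round metric on the smoothed link $L_v\cong S^3$ exists if and only if the $G_v$-action on $L_v$ is smoothly conjugate to an orthogonal action. More basically, you need a smooth structure at the cone point $v$ in which $G_v$ acts smoothly at all, and that amounts to the same linearization. For finite groups acting on $S^3$ this is a deep theorem (a consequence of Perelman's work, due to Dinkelbach--Leeb). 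Inputs of this kind are why equivariant smoothing is only available in low dimensions.

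A sign that something is missing is that your argument never uses $D=4$ essentially. It would equally ``prove'' the statement in dimensions where, as the paper notes, smoothings compatible with the symmetries need not exist. The same issue appears in milder form at the cone point of $C(L_e)$ in Step~2. There you need the stabilizer of $e$ to act on $L_e\cong S^2$ conjugately to an orthogonal action; this is classical (e.g.\ via uniformization), but it must be invoked. Saying that each local smoothing ``depends only on the combinatorics of the local link'' does not cover this. Smoothing a cone point involves a choice, and whether an invariant choice exists is exactly the question.

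\textbf{Steps~2 and~3 are also inconsistent as written.} Your triangle and edge smoothings are done at fixed scales: caps flat outside a disk of fixed radius, applied uniformly along each simplex. Near a vertex, the smoothed region then subtends an angle of order $\delta/r$ in the link at radius $r$. So the metric there is not of the form $dr^2+r^2h$ for a fixed $h$, and the cone description you start Step~3 from is unavailable.

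You could make the smoothings conical near vertices. A metric cone satisfies $R(\cdot,\cdot)\partial_r=0$, so its $p_1$ form also vanishes away from the tip. But then you must interpolate, equivariantly, between conical and product smoothings in the overlaps. That is the compatibility problem you flag as the main obstacle and leave open.

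The paper handles both problems at once with Lange's symmetric product cover. Each chart $V_x\times S_x$ is mapped smoothly and equivariantly to $D^k\times D^{4-k}$. The product metrics $g_u$ from these charts are glued to the polyhedral $g_0$ with cutoffs $F(d_u)$ and $F(d_0)$. Invariance and smoothness are thus inherited from Lange's theorem rather than built stratum by stratum.
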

The basic idea for obtaining such a metric is to start with the polyhedral metric on the triangulation of $M$, that is, the metric where each each simplex is isometric to the regular 4-simplex, and smoothen it on the boundaries between the simplexes. The proof is quite technical and relies heavily on the results of \cite{Lange_2017}, which shows that a triangulation can be smoothed while respecting all of its symmetries. We defer the full proof to App. \ref{app: natural-metric}. 

In the higher-dimensional case, if a manifold has a similar natural metric (where all top-dimensional Pontryagin forms vanish outside the neighborhoods of the vertices), the same proof would go through. The challenge is that a triangulation of a manifold in dimension $D\ge 8$ does not necessarily give a smooth structure. Moreover, even if this triangulation is smooth, the smooth structure (and hence the Riemann metric) will not necessarily be compatible with the symmetries. Note that the Pontryagin numbers can be defined and calculated without reference to the smooth structure \cite{milnor1974characteristic}, giving a possible avenue for proving the higher-dimensional result.

\subsection{A multi-entropy measure for the 3FWW state}
\begin{figure}
  \begin{center}
    \begin{tikzpicture}[manifold gem]
      \tikzset{
      v1/.append style={minimum size=3.8mm},
    v2/.append style={minimum size=3.8mm}}
      \def\r{1.8}
      \def\a{20}
      \node[v1] (a) at (90:\r) {1};
      \node[v2] (b) at (150:\r) {$2^*$};
      \node[v1] (c) at (210:\r) {2};
      \node[v2] (d) at (270:\r) {$4^*$};
      \node[v1] (e) at (330:\r) {3};
      \node[v2] (f) at (30:\r) {$3^*$};
      \node[v1] (g) at (0,-\r*.4) {$4$};
      \node[v2] (h) at (0,\r*.4) {$1^*$};
      \draw[e0] (a) -- (h);
      \draw[e0] (d) -- (g);
      \draw[e0] (b) to[bend right=\a] (c);
      \draw[e0] (e) to[bend right=\a] (f);
      \draw[e1] (c) -- (h);
      \draw[e1] (f) -- (g);
      \draw[e1] (a) to[bend right=\a] (b);
      \draw[e1] (d) to[bend right=\a] (e);
      \draw[e2] (e) -- (h);
      \draw[e2] (b) -- (g);
      \draw[e2] (c) to[bend right=\a] (d);
      \draw[e2] (f) to[bend right=\a] (a);
      \draw[e3] (a) to[bend right=-\a] (b);
      \draw[e3] (c) to[bend right=-\a] (d);
      \draw[e3] (g) to[bend right=-\a] (h);
      \draw[e3] (e) to[bend right=-\a] (f);
      \draw[e4] (b) to[bend right=-\a] (c);
      \draw[e4] (d) to[bend right=-\a] (e);
      \draw[e4] (g) to[bend right=\a] (h);
      \draw[e4] (f) to[bend right=-\a] (a);

      \begin{scope}[shift={(3,0)},yscale=.7,xscale=1.3,every node/.append style={font={\Large}}]
      \draw[e0] (0,2) --++ (1,0) node[right] {$\Lambda$};
      \draw[e1] (0,1) --++ (1,0) node[right] {$A$};
      \draw[e2] (0,0) --++ (1,0) node[right] {$B$};
      \draw[e3] (0,-1) --++ (1,0) node[right] {$C$};
      \draw[e4] (0,-2) --++ (1,0) node[right] {$D$};
      \end{scope}
    \end{tikzpicture}
  \end{center}
  \caption{The $\CC P^2$ gem arising from the permutations in Eq. \eqref{eq: cp2-perm-def}. }\label{fig: cp2-gem}
\end{figure}

The results of the previous section show that there is no locally-achiral gem for $\CC P^2$. Here, we show that we can use the known gem of $\CC P^2$ to obtain an entanglement probe for the 3FWW model, and argue that it is robust under time-reversal invariant perturbations. This gives an ``order parameter" for the 3FWW model. To define the measure, consider the abstract permutations on four elements: 
\begin{equation}
  \begin{aligned}
    \pi &= (1\ 2)(3\ 4), & \sigma &= (1\ 3)(2\ 4), & \mu &= (1\ 2\ 4), & \tau &= (1\ 3\ 4).
  \end{aligned}
  \label{eq: cp2-perm-def}
\end{equation}
Consider the partition of the 3-dimensional space as in Fig. \ref{fig: regions}c. These permutations define the gem in Fig. \ref{fig: cp2-gem}.   The entanglement probe will be defined as the ratio 
\begin{equation}
  \mathcal{F}(\y) = \frac{\mel{\y^{\otimes 4}}{\pi_A\sigma_B\mu_C\tau_D}{\y^{\otimes 4}}}{\mel{\y^{\otimes 4}}{\pi_A \sigma_B \mu_C \mu_D}{\y^{\otimes 4}}}=
  \frac{\ev{
      \vcenter{\hbox{
          \begin{tikzpicture}[manifold gem]
      \tikzset{
      v1/.append style={minimum size=1.2mm},
    v2/.append style={minimum size=1.2mm}}
      \def\r{.8}
      \def\a{20}
      \node[v1] (a) at (90:\r) {};
      \node[v2] (b) at (150:\r) {};
      \node[v1] (c) at (210:\r) {};
      \node[v2] (d) at (270:\r) {};
      \node[v1] (e) at (330:\r) {};
      \node[v2] (f) at (30:\r) {};
      \node[v1] (g) at (0,-\r*.4) {};
      \node[v2] (h) at (0,\r*.4) {};
      \draw[e0] (a) -- (h);
      \draw[e0] (d) -- (g);
      \draw[e0] (b) to[bend right=\a] (c);
      \draw[e0] (e) to[bend right=\a] (f);
      \draw[e1] (c) -- (h);
      \draw[e1] (f) -- (g);
      \draw[e1] (a) to[bend right=\a] (b);
      \draw[e1] (d) to[bend right=\a] (e);
      \draw[e2] (e) -- (h);
      \draw[e2] (b) -- (g);
      \draw[e2] (c) to[bend right=\a] (d);
      \draw[e2] (f) to[bend right=\a] (a);
      \draw[e3] (a) to[bend right=-\a] (b);
      \draw[e3] (c) to[bend right=-\a] (d);
      \draw[e3] (g) to[bend right=-\a] (h);
      \draw[e3] (e) to[bend right=-\a] (f);
      \draw[e4] (b) to[bend right=-\a] (c);
      \draw[e4] (d) to[bend right=-\a] (e);
      \draw[e4] (g) to[bend right=\a] (h);
      \draw[e4] (f) to[bend right=-\a] (a);
    \end{tikzpicture}
      }}
    }}{\ev{
      \vcenter{\hbox{
          \begin{tikzpicture}[manifold gem]
      \tikzset{
      v1/.append style={minimum size=1.2mm},
    v2/.append style={minimum size=1.2mm}}
      \def\r{.8}
      \def\a{20}
      \node[v1] (a) at (90:\r) {};
      \node[v2] (b) at (150:\r) {};
      \node[v1] (c) at (210:\r) {};
      \node[v2] (d) at (270:\r) {};
      \node[v1] (e) at (330:\r) {};
      \node[v2] (f) at (30:\r) {};
      \node[v1] (g) at (0,-\r*.4) {};
      \node[v2] (h) at (0,\r*.4) {};
      \draw[e0] (a) -- (h);
      \draw[e0] (d) -- (g);
      \draw[e0] (b) to[bend right=\a] (c);
      \draw[e0] (e) to[bend right=\a] (f);
      \draw[e1] (c) -- (h);
      \draw[e1] (f) -- (g);
      \draw[e1] (a) to[bend right=\a] (b);
      \draw[e1] (d) to[bend right=\a] (e);
      \draw[e2] (e) -- (h);
      \draw[e2] (b) -- (g);
      \draw[e2] (c) to[bend right=\a] (d);
      \draw[e2] (f) to[bend right=\a] (a);
      \draw[e3] (a) to (b);
      \draw[e3] (c) to (d);
      \draw[e3] (g) to (h);
      \draw[e3] (e) to (f);
      \draw[e4] (a) to[bend right=-\a] (b);
      \draw[e4] (c) to[bend right=-\a] (d);
      \draw[e4] (g) to[bend right=\a] (h);
      \draw[e4] (e) to[bend right=-\a] (f);
    \end{tikzpicture}
      }}
  }}.
  \label{eq: F-3fww-def}
\end{equation}
We use the operator $\pi_A$ to denote the application of the permutation $\pi$ on the four replicas inside region $A$. In \cite{gagliardi1989genus,Basak_Spreer_2016} it was shown that the gem on the numerator is the gem for $\CC P^2$. The gem on the denominator is $S^4$, as can be shown by a series of ``dipole moves" \cite{Lins_1995}. As a result, if the phase of the entanglement measure is indeed universal, we expect 
\begin{equation}
  \mathcal{F}(\y_{\rm 3FWW})\propto\frac{Z(\CC P^2)}{Z(S^4)}=-1
  \label{eq: F-3fww-calc}
\end{equation}
up to a nonuniversal positive multiplicative constant.

\subsubsection{Triviality on 4-partite states}
To argue that the resulting entanglement measure is universal, we show that it gives a trivial phase for states supported only on four of the five regions (that is, states ``local" at a corner). The argument is different between states supported away from regions $A,B,\Lambda$ and from regions $C,D$. 

Starting with the former, let $\ket{\y_{ABCD}}$ be supported on regions $A,B,C,D$, then 
\begin{equation}
  \mathcal{F}(\y_{ABCD})=
  \frac{\ev{
      \vcenter{\hbox{
          \begin{tikzpicture}[manifold gem]
      \tikzset{
      v1/.append style={minimum size=1.2mm},
    v2/.append style={minimum size=1.2mm}}
      \def\r{.8}
      \def\a{20}
      \node[v1] (a) at (90:\r) {};
      \node[v2] (b) at (150:\r) {};
      \node[v1] (c) at (210:\r) {};
      \node[v2] (d) at (270:\r) {};
      \node[v1] (e) at (330:\r) {};
      \node[v2] (f) at (30:\r) {};
      \node[v1] (g) at (0,-\r*.4) {};
      \node[v2] (h) at (0,\r*.4) {};
      \draw[e1] (c) -- (h);
      \draw[e1] (f) -- (g);
      \draw[e1] (a) to[bend right=\a] (b);
      \draw[e1] (d) to[bend right=\a] (e);
      \draw[e2] (e) -- (h);
      \draw[e2] (b) -- (g);
      \draw[e2] (c) to[bend right=\a] (d);
      \draw[e2] (f) to[bend right=\a] (a);
      \draw[e3] (a) to[bend right=-\a] (b);
      \draw[e3] (c) to[bend right=-\a] (d);
      \draw[e3] (g) to[bend right=-\a] (h);
      \draw[e3] (e) to[bend right=-\a] (f);
      \draw[e4] (b) to[bend right=-\a] (c);
      \draw[e4] (d) to[bend right=-\a] (e);
      \draw[e4] (g) to[bend right=\a] (h);
      \draw[e4] (f) to[bend right=-\a] (a);
    \end{tikzpicture}
      }}
    }}{\ev{
      \vcenter{\hbox{
          \begin{tikzpicture}[manifold gem]
      \tikzset{
      v1/.append style={minimum size=1.2mm},
    v2/.append style={minimum size=1.2mm}}
      \def\r{.8}
      \def\a{20}
      \node[v1] (a) at (90:\r) {};
      \node[v2] (b) at (150:\r) {};
      \node[v1] (c) at (210:\r) {};
      \node[v2] (d) at (270:\r) {};
      \node[v1] (e) at (330:\r) {};
      \node[v2] (f) at (30:\r) {};
      \node[v1] (g) at (0,-\r*.4) {};
      \node[v2] (h) at (0,\r*.4) {};
      \draw[e1] (c) -- (h);
      \draw[e1] (f) -- (g);
      \draw[e1] (a) to[bend right=\a] (b);
      \draw[e1] (d) to[bend right=\a] (e);
      \draw[e2] (e) -- (h);
      \draw[e2] (b) -- (g);
      \draw[e2] (c) to[bend right=\a] (d);
      \draw[e2] (f) to[bend right=\a] (a);
      \draw[e3] (a) to (b);
      \draw[e3] (c) to (d);
      \draw[e3] (g) to (h);
      \draw[e3] (e) to (f);
      \draw[e4] (a) to[bend right=-\a] (b);
      \draw[e4] (c) to[bend right=-\a] (d);
      \draw[e4] (g) to[bend right=\a] (h);
      \draw[e4] (e) to[bend right=-\a] (f);
    \end{tikzpicture}
      }}
  }}
  =\frac{\mel{\y}{\rho_{AC}\rho_{AB}\rho_{BD}}{\y}}{\mel{\y}{\rho_{A}\rho_{AB}\rho_{BCD}}{\y}}=\frac{\mel{\y}{\rho_{AC}\rho_{AB}\rho_{AC}}{\y}}{\mel{\y}{\rho_{A}\rho_{AB}\rho_{A}}{\y}}\ge 0.
\end{equation}
Where $\rho$ are the reduced density matrices. Here in the first equality we used the following graphical representations of $\rho$ (with appropriate coloring)
\begin{equation}
  \begin{aligned}
    \rho_{A}&=
    \vcenter{\hbox{\begin{tikzpicture}[manifold gem]
    \node[v1] (a) at (0,0) {}; 
    \node[v2] (b) at (1.5,0) {};
    
    \draw[e1] (-.5,0) -- (a);
    \draw[e1] (b) --++ (.5,0);
    \draw[e0] (a) to[bend right=50] (b);
    \draw[e2] (a) to[bend right=25] (b);
    \draw[e3] (a) to[bend right=-25] (b);
    \draw[e4] (a) to[bend right=-50] (b);
    
\end{tikzpicture}}}, &
\rho_{AB}&=
\vcenter{\hbox{\begin{tikzpicture}[manifold gem]
    \node[v1] (a) at (0,0) {}; 
    \node[v2] (b) at (1.5,0) {};
    
    \draw[e1] (a) --++ (180-30:.6);
    \draw[e1] (b) --++ (30:.6);
    \draw[e2] (a) --++ (180+30:.6);
    \draw[e2] (b) --++ (-30:.6);
    \draw[e0] (a) to[bend right=30] (b);
    \draw[e3] (a) to (b);
    \draw[e4] (a) to[bend right=-30] (b);
    
\end{tikzpicture}}}, &
\rho_{ABC}&=
\vcenter{\hbox{\begin{tikzpicture}[manifold gem]
    \node[v1] (a) at (0,0) {}; 
    \node[v2] (b) at (1.5,0) {};
    
    \draw[e1] (a) --++ (180-30:.6);
    \draw[e1] (b) --++ (30:.6);
    \draw[e2] (a) --++ (180:.6);
    \draw[e2] (b) --++ (0:.6);
    \draw[e3] (a) --++ (180+30:.6);
    \draw[e3] (b) --++ (-30:.6);
    \draw[e0] (a) to[bend right=30] (b);
    \draw[e4] (a) to[bend right=-30] (b);
    
\end{tikzpicture}}}. 
  \end{aligned}
\end{equation}
In the second equality we used the fact that, for a pure state on $ABCD$, $\rho_{AC}\ket\y=\rho_{BD}\ket\y,\rho_{A}\ket\y=\rho_{BCD}\ket\y$. In the final inequality we used the fact that $\rho_{AB}$ is positive semidefinite. The same argument works when removing the regions $A$ or $B$ instead of $\Lambda$. Note that we haven't relied on $\y_{ABCD}$ being time-reversal symmetric. A similar argument works for a state supported on $BCD\Lambda$ and $ACD\Lambda$.

For a state supported on $ABC\Lambda$, we get from \eqref{eq: F-3fww-def} that $\mathcal{F}(\y_{ABC\Lambda})=1>0$, since the numerator and denominator are the same. Finally, using the symmetry of the gem, we have, for any $\ket\y$,
\begin{equation}
  \mel{\y^{\otimes 4}}{\pi_A \sigma_B \mu_C\mu_D}{\y^{\otimes 4}}=\ev{
      \vcenter{\hbox{
          \begin{tikzpicture}[manifold gem]
      \tikzset{
      v1/.append style={minimum size=1.2mm},
    v2/.append style={minimum size=1.2mm}}
      \def\r{.8}
      \def\a{20}
      \node[v1] (a) at (90:\r) {};
      \node[v2] (b) at (150:\r) {};
      \node[v1] (c) at (210:\r) {};
      \node[v2] (d) at (270:\r) {};
      \node[v1] (e) at (330:\r) {};
      \node[v2] (f) at (30:\r) {};
      \node[v1] (g) at (0,-\r*.4) {};
      \node[v2] (h) at (0,\r*.4) {};
      \draw[e0] (a) -- (h);
      \draw[e0] (d) -- (g);
      \draw[e0] (b) to[bend right=\a] (c);
      \draw[e0] (e) to[bend right=\a] (f);
      \draw[e1] (c) -- (h);
      \draw[e1] (f) -- (g);
      \draw[e1] (a) to[bend right=\a] (b);
      \draw[e1] (d) to[bend right=\a] (e);
      \draw[e2] (e) -- (h);
      \draw[e2] (b) -- (g);
      \draw[e2] (c) to[bend right=\a] (d);
      \draw[e2] (f) to[bend right=\a] (a);
      \draw[e3] (a) to (b);
      \draw[e3] (c) to (d);
      \draw[e3] (g) to (h);
      \draw[e3] (e) to (f);
      \draw[e4] (a) to[bend right=-\a] (b);
      \draw[e4] (c) to[bend right=-\a] (d);
      \draw[e4] (g) to[bend right=\a] (h);
      \draw[e4] (e) to[bend right=-\a] (f);
    \end{tikzpicture}
      }}
  }=\ev{
      \vcenter{\hbox{
          \begin{tikzpicture}[rotate=180,manifold gem]
      \tikzset{
      v1/.append style={minimum size=1.2mm},
    v2/.append style={minimum size=1.2mm}}
      \def\r{.8}
      \def\a{20}
      \node[v2] (a) at (90:\r) {};
      \node[v1] (b) at (150:\r) {};
      \node[v2] (c) at (210:\r) {};
      \node[v1] (d) at (270:\r) {};
      \node[v2] (e) at (330:\r) {};
      \node[v1] (f) at (30:\r) {};
      \node[v2] (g) at (0,-\r*.4) {};
      \node[v1] (h) at (0,\r*.4) {};
      \draw[e0] (a) -- (h);
      \draw[e0] (d) -- (g);
      \draw[e0] (b) to[bend right=\a] (c);
      \draw[e0] (e) to[bend right=\a] (f);
      \draw[e1] (c) -- (h);
      \draw[e1] (f) -- (g);
      \draw[e1] (a) to[bend right=\a] (b);
      \draw[e1] (d) to[bend right=\a] (e);
      \draw[e2] (e) -- (h);
      \draw[e2] (b) -- (g);
      \draw[e2] (c) to[bend right=\a] (d);
      \draw[e2] (f) to[bend right=\a] (a);
      \draw[e3] (a) to (b);
      \draw[e3] (c) to (d);
      \draw[e3] (g) to (h);
      \draw[e3] (e) to (f);
      \draw[e4] (a) to[bend right=-\a] (b);
      \draw[e4] (c) to[bend right=-\a] (d);
      \draw[e4] (g) to[bend right=\a] (h);
      \draw[e4] (e) to[bend right=-\a] (f);
    \end{tikzpicture}
      }}
  }^*=\mel{\y^{\otimes 4}}{\pi_A \sigma_B \tau_C \tau_D}{\y^{\otimes 4}}^*
\end{equation}
since exchanging edge colors in the gem results in complex conjugation. As a result, for any $\y_{ABD\Lambda}$,
\begin{equation}
  \mathcal{F}(\y_{ABD\Lambda})=  \frac{\mel{\y_{ABD\Lambda}^{\otimes 4}}{\pi_A\sigma_B\tau_D}{\y_{ABD\Lambda}^{\otimes 4}}}{\mel{\y_{ABD\Lambda}^{\otimes 4}}{\pi_A \sigma_B \mu_D}{\y_{ABD\Lambda}^{\otimes 4}}} = \frac{\mel{\y_{ABD\Lambda}^{\otimes 4}}{\pi_A\sigma_B\tau_D}{\y_{ABD\Lambda}^{\otimes 4}}}{\mel{\y_{ABD\Lambda}^{\otimes 4}}{\pi_A \sigma_B \tau_D}{\y_{ABD\Lambda}^{\otimes 4}}^*} .
\end{equation}
At this point we need to use the fact that $\y$ is time-reversal symmetric. In this case, we again get $\mathcal{F}(\y_{ABD\Lambda})=1>0$. We therefore find that stacking with any time-reversal symmetric 4-partite state on any of the region corners will not change $\arg(\mathcal{F})$. It also suggests what happens when we allow for time-reversal symmetry breaking: the argument of $\mathcal{F}$ can change continuously from $\pi$ to $0$, due to the contribution of 4-partite entanglement inside the regions $ABD\Lambda$.

\section{Conclusions}
\label{sec: conclusions}
We showed that the notion of universal multipartite entanglement measures for topological order is intimately related to a topological condition of locally-achiral manifolds. We used this idea to construct universal entanglement measures that detect infinitely many of the MS examples. In higher dimensions, we argued that obstructions to local achirality are related to the existence of T-SPTs, and proved that, in dimension four, a non-zero Pontryagin number is an obstruction to a manifold being locally achiral. We also presented a multi-entropy probe that detects the 3FWW state. We now discuss a few open questions for future work.

Beyond nonzero Pontryagin numbers, we failed to find any obstructions to a manifold being locally-achiral. A rather bold conjecture, then, is that a manifold is locally-achiral if and only if all Pontryagin numbers vanish. This would likely imply, in particular, that all 3-manifolds are locally-achiral, and that all topological phases in 2+1d can be distinguished by universal multi-entropy measures. This assumes that any two theories can be distinguished by the partition function on some manifold $M$.

Beyond the general conjecture, it would be interesting to ask whether, for example, manifolds detecting the nontrivial invertible state in 4+1d \cite{chen2021exactly} could be locally achiral. This is a phase whose partition function is given by $e^{i\pi\int_M w_2w_3}$ where $w_i$ is the Stiefel-Whitney class, and is the lowest-dimensional example of nontrivial invertible states (without any symmetry) beyond chiral states in 2+1d. Since Stiefel-Whitney numbers on a triangulated manifold can be computed in polynomial time, it might be possible to automate a search procedure for a locally-achiral gem whose partition function detects this phase. Another possibility is to directly construct a locally-achiral gem of the Dold manifold $(\CC P^2\times S^1 )/\Z_2$, which is known to detect this phase. 

An interesting extension of the framework provided here is to fermionic topological phases. In that case, the manifold $M$ is required to be a spin manifold. It is interesting to ask how this requirement manifests in the form of allowed permutations. In addition, for fermionic systems, defining the permutation operators requires some choice of a sign structure, which should be related to the resulting spin structure of $M$. What will that mean for the cancellation of local contributions?

Finally, the major question of interest is to prove the main conjecture presented in Sec. \ref{sec: multi-ent}.  One question is whether it is possible to prove it under the assumption of the ``entanglement bootstrap" axioms \cite{Shi_Kato_Kim_2020}. Roughly speaking, a state satisfies the entanglement bootstrap axioms when the entanglement entropy of a region $A$ is given by $S(A)=\alpha \abs{\partial A}-\gamma$, for some constants $\alpha,\gamma$, with no subleading corrections. Notice that the example of \cite{flammia_topological_2009} shows that this property can hold exactly for the von-Neumann entropy while being true only approximately for any R\'enyi quantity (in particular, for multi-entropy measures).

\paragraph{Acknowledgements.}
I thank Ruihua Fan, Ady Stern, Erez Berg and Shinsei Ryu for inspiring discussions and collaboration on related projects. I thank Ramanjit Sohal, Michael Levin, Ryan Thorngren and especially Julian Gass for useful conversations, and Shinsei Ryu for insightful comments on the manuscript. Much of this project was completed during an unexpected period of traveling, I am grateful to Raquel Queiroz, Shinsei Ryu, and Lei Gioia for their hospitality during that time. This work was supported by grants from the DFG (CRC/Transregio 183, EI 519/71, projects C02 and C03), and the ISF Quantum Science and Technology program (Grant no. 2478/24). I am supported by the Adams Fellowship Program of the Israel Academy of Sciences and Humanities.

\appendix
\section{The partition functions of \texorpdfstring{$\fei n$}{F8\_n}}
\label{app: fei partition function}
\newcommand{\dpq}{D(G_{p,q},u)}
Here we describe in more detail the partition functions of the manifolds $\fei n$ for the topological theories $D(G_{p,q},u)$. We will use the tools developed in \cite{Bonderson_2018_beyond} to calculate the figure-8 knot diagrams, $K_i$. The partition functions are then obtained using eq. \eqref{eq: fig-8-partition-fn}.

\subsection{Anyons in the \texorpdfstring{$D(G_{p,q},u)$}{D(G\_pq,u)} theories}
We begin by describing the anyons in the $\dpq$ theories, following the presentation in \cite{Bonderson_2018_beyond}. Let $a,b$ be the generators of $G_{p,q}$ as in \eqref{eq:G_pq}. The anyons in $\dpq$ are given by tuples $([g],\pi_g)$ where $g$ is a conjugacy class of an element $g$ in $G_{p,q}$, and $\pi_g$ is an irreducible projective representation of the centralizer of $g$ in $G_{p,q}$. The conjugacy classes of $G_{p,q}$ are $e$, $[a^k]$ for $k=1,...,q-1/p$ and $[b^k]$ for $k=1,...,p-1$. 

To describe the projective representations we use the form of the 3-cocycles of $H(G_{p,q},U(1))$ given by
\begin{equation}
    \omega_u([b^m],[b^n],[b^t])=\begin{cases}
    1, & \textrm{if }m+n<p\\
    e^{\frac{2\pi i}{p}u t}, & \textrm {if }m+n\ge p\\
    \end{cases}.
\end{equation}
for elements of the conjugacy classes of $[b^k]$, and 1 otherwise. It induces a 2-cochain given by
\begin{equation}
    \theta_g(x,y)=\frac{\omega(g,x,y)\omega(x,y,(xy)^{-1}gxy)}{\omega(x,x^{-1}gx,y)}.
\end{equation}
The projective representations $\pi_g$ then satisfy
\begin{equation}
    \pi_g(xy)=\pi_g(x)\pi_g(y)\theta_g(x,y).
\end{equation}
Denote by $B_{k,n}$ the anyons of the form $([b^k],\pi_k^n)$ where $\pi_k^n$ is one of the projective representations of the centralizer $G_{b^k}\cong \Z_p$. It is given by 
\begin{equation}
    \pi_k^n(b^l)=e^{\frac{2\pi i }{p^2}(np+uk)l}.
\end{equation}
with $n=0,...,p-1$. The $B$-type anyons are the ones that are of interest to us. This is because, for all other anyons, the associated projective representations are just regular representations, and therefore both $K_i,d_i,\theta_i$ are $u$ independent. We therefore have 
\begin{equation}
    Z(\fei n,u)-Z(\fei n,u')=\frac{1}{D^2}\sum_{k,t}\qty[d_{B_{k,t}}\theta_{B_{k,t}}^n(u) K_{B_{k,t}}(u)
    -d_{B_{k,t}}\theta_{B_{k,t}}^n(u') K_{B_{k,t}}(u')].
\end{equation}
The quantum dimensions and topological twists are given in \cite{Bonderson_2018_beyond} as
\begin{align}
    d_{B_{k,t}}&=q ,\\
    D&=pq, \\
    \theta_{B_{k,t}}&=e^{\frac{2\pi i}{p}kt}e^{\frac{2\pi i}{p^2}k^2u}.
\end{align}
We next calculate $K_{B_{k,t}}$.

\subsection{Calculation of the figure-8 knot diagrams}
\newcommand{\K}{\mathcal{K}}
\begin{figure}
    \centering
    \begin{tikzpicture}
        \node at (0,0){
        \begin{tikzpicture}[every node/.append style={font={\small}}]
            \node at (0,0) {\includegraphics[scale=1.3]{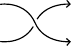}};
            \node at (-.6,.6) {$x$};
            \node at (.6,-.6) {$x$};
            \node at (.6,.6) {$y$};
            \node at (-.6,-.6) {$x\rhd y$};
            \begin{scope}[shift={(0,-1.9)}]
            \node[yscale=-1] at (0,0) {\includegraphics[scale=1.3]{coloring.pdf}};
            \node at (-.6,.6) {$y$};
            \node at (.6,-.6) {$x\rhd y$};
            \node at (.6,.6) {$x$};
            \node at (-.6,-.6) {$x$};
            \end{scope}
        \end{tikzpicture}
        };
        \node at (6,0){
        \begin{tikzpicture}[every node/.append style={font={\scriptsize}}]
            \node at (0,0) {\includegraphics[scale=1.3]{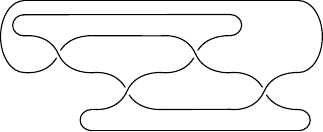}};
            \begin{scope}[shift={(-2.2,.25)}]
            \node at (-.6,.55) {$x$};
            \node at (.6,-.55) {$x$};
            \node at (.6,.55) {$y$};
            \node at (-.6,-.55) {$x\rhd y$};
            \end{scope}
            \begin{scope}[shift={(+0.7,.25)}]
            \node at (-.6,.55) {$y$};
            \node at (.6,-.55) {$y$};
            \node at (.6,.55) {$x$};
            \node at (-.6,-.55) {$y\rhd x$};
            \end{scope}
            \begin{scope}[shift={(-0.5,-.55)}]
            \node at (.7,-.55) {$(y\rhd x)\rhd x$};
            \node at (-.7,-.55) {$y\rhd x$};
            \end{scope}
            \begin{scope}[shift={(1.9,-.55)}]
            \node at (.85,-.65) {$(x\rhd y)\rhd y$};
            \node at (1.1,+.20) {$x\rhd y$};
            \end{scope}
        \end{tikzpicture}
        };
        \node at (0,-2.0) {(a)};
        \node at (6,-2.0) {(b)};
    \end{tikzpicture}
    \caption{(a) coloring rule for the crossing. (b) Coloring of the figure-8 knot.}
    \label{fig: quandles}
\end{figure}

In \cite{Bonderson_2018_beyond} it was shown that anyon diagrams for $B_{k,t}$ can be calculated in terms of ``quandles". We give their results without proof. A quandle $X_k$ is a set $X$ with a binary operation $\rhd:X\times X\to X$ satisfying certain compatibility conditions. Here we consider the Alexander quandle associated with $B_{k,t}$, where $X_k=\Z_q$ and 
\begin{equation}
    x\rhd y=(1-s^k)a+s^kb.
\end{equation}
The calculation of a knot diagram $L$ for $B_{k,t}$ is then carried as follows: Consider a braid diagram for $L$, and color the lines of each crossing by elements of $X_k$ as in Fig. \ref{fig: quandles}. The invariant of the knot $L$ with the anyon $B_{k,t}$ is then given by 
\begin{equation}
    q^{Wr(L)}C_{X_k}(L)
\end{equation}
where $Wr(L)$ is the writhe of $L$ (the number of over crossings minus the number of under crossings), $q=e^{\frac{2\pi i}{p^2}(sp+uk)k}$, and $C_{X_k}(L)$ is the number of consistent coloring of $L$ with $X_k$. For the figure-8 knot, the writhe is 0. We follow the diagram in Fig. \ref{fig: quandles}b and notice that consistent colorings are labeled by $x,y\in \Z_q$ satisfying
\begin{equation}
\begin{aligned}
    (y\rhd x)\rhd x&=x\rhd y, \\
    (x\rhd y)\rhd y&= y\rhd x,
    \end{aligned}
\end{equation}
which gives
\begin{equation}
    (1-3s^k+s^{2k})(x-y)\equiv0.
\end{equation}
where we use the equivalent sign for equality mod $q$. We therefore find that
\begin{equation}
    K_k\coloneq K_{B_{k,t}}=\begin{cases}
        q^2 & s^{2k}-3s^k+1\equiv0\\
        q & \rm{otherwise}
    \end{cases}.
\end{equation}
Let $\mathcal{K}$ be the space of solutions of $s^{2k}-3s^k+1=0\mod q$. Since $s^k\equiv x$ has at most 1 solution for any $x,s$, $\abs{\K}\le 2$. Furthermore, $x^2-3x+1\equiv0$ has a solution if and only if 5 is a quadratic residue mod $q$, and this is the case if and only if $q=\pm1\mod 5$. Also, $s^k=x$ for some $k$ if and only if $x^{\frac{q-1}{p}}\equiv 0$. Finally, if $k$ is a solution then $-k$ is also a solution, since if $s^k=\frac{3\pm\sqrt5}{2}$ then $s^{-k}=\frac{3\mp\sqrt5}{2}$.

We therefore find that $\abs{\K}=2$ if $q=\pm1\mod 5$ and the solution of $x^2-3x+1\equiv0$ satisfies $x^{\frac{q-1}{p}}\equiv 0$, and $\abs{\K}=0$ otherwise. The partition function difference is then given by 
\begin{equation}
\begin{aligned}
    Z(\fei n,u)-Z(\fei n,u')&=\frac{1}{p^2}\sum_{k,t}\qty[\theta_{B_{k,t}}^n(u)-\theta_{B_{k,t}}^n(u')]+\frac{q-1}{p^2}\sum_{t,k\in\K}[\theta_{B_{k,t}}^n(u)-\theta_{B_{k,t}}^n(u')]. \\
    &=\frac{\delta_{n=0\mod p}}{p}\qty{\qty[\qty(\frac{u}{p})-\qty(\frac{u'}{p})]\zeta_{n,p}+(q-1)\sum_{k\in\K}\qty(e^{\frac{2\pi inu}{p^2}k^2}-e^{\frac{2\pi inu'}{p^2}k^2})}.
\end{aligned}
\end{equation}
Here $\qty(\frac{u}{p})$ is the Legendre symbol. Let us assume $\qty(\frac{u}{p})=\qty(\frac{u'}{p})$ for simplicity, since otherwise the theories can be distinguished by the partition function of $L(p,1)$. Also, take $n=p$. If $\abs{\K}=0$ then $Z(\fei n,u)=Z(\fei n,u')$, otherwise
\begin{equation}
    Z(\fei n,u)-Z(\fei n,u')=2\frac{q-1}{p}(e^{\frac{2\pi i u}{p}k^2}-e^{\frac{2\pi i u'}{p}k^2}).
\end{equation}
which is zero if and only if $u=u'$. This proves Proposition \ref{prop: fei q conditions}.

It is interesting to ask whether there are infinitely many tuples $p,q$ such that $Z(\fei q,u)$ are distinct. Taking $q\neq\pm1 \mod 5$ and assuming that $p$ is the largest prime divisor of $q$, an estimate for $p$ is that $p>q^{0.6}$ \cite{401713}. The probability that the solution $x^2-3x+1=0$ satisfies $x^{\frac{q-1}{p}}=1$ is then $~q^{1-0.6}$. We can therefore estimate the number of such $p,q$ tuples with $q\le n$ as
\begin{equation}
    N(n)\gtrsim \sum_2^n\frac{1}{n^{0.4}\log n}\sim n^{0.6}.
\end{equation}
In particular, we expect that there are infinitely many such examples.

\section{Proof of Lemma \ref{lem: natural-metric}}
\label{app: natural-metric}
\newcommand{\supp}{{\rm supp}} 
\newcommand{\Ko}{K^{(1)}}
\newcommand{\U}{\mathcal{U}}
Here we prove that, for every triangulated manifold $M$, we can find a natural metric $g$. That is, there is a metric for which the Pontryagin form $p_1$ vanishes everywhere but around small neighborhoods of the vertices of $M$. Furthermore, for any vertex $v$, if $\str(v)$ is symmetric under the action of a group $G$ (acting simplicially, that is sending simplexes to simplexes), we can choose $G$ to act smoothly on $\str(v)$, and for any element $f\in G$ we have $g|_{\str(v)}=f^*g|_{\str(v)}$.

We begin by recalling a few notions of triangulated manifolds. A triangulation $K$ of an $n$-manifold $M$ is a partition of $M$ into $n$-simplexes (the latter are the regions $\Delta^n=\{{x\in \R^n}| {x_i\ge0}, \sum_i x_i\le 1\}$ in Euclidean space). We say that the triangulation is smooth if the maps $\Delta^n\to M$ are smooth maps. For $n\le 7$ we can always choose a smooth structure on $M$ such that the triangulation is smooth (every triangulation can be smoothened), but in higher dimensions this is not the case. The intersection of any two simplexes is a simplex of lower dimension. The $k$-skeleton of a triangulation is the union of all simplexes of dimension $\le k$. 

For any triangulated manifold, we can define a polyhedral metric $g_0$ by giving each $n$-simplex the metric of the regular $n$-simplex in Euclidean space. This metric extends to a smooth flat metric in the complement of the $n-2$ skeleton. The challenge is to modify this metric around the lower-dimensional skeleton such that it is smooth everywhere. In \cite{Lange_2017}, it was shown that $M$ can be smoothed while preserving all symmetries of $\str(v)$. We will follow his results to obtain a smooth metric which preserves all symmetries of $\str(v)$. 

The barycentric subdivision $K^{(1)}$ of $K$ is the triangulation whose vertices are the midpoints of the simplexes in $K$ (including the lower-dimensional simplexes). For each vertex in $x$ of $K^{(1)}$, denote by $\supp(x)$ the minimal dimension simplex of $K$ that $v'$ belongs to (for example, the vertex if $x$ is a vertex of $K$, an edge if $x$ belongs to an edge, etc.). Also, for any simplex $s$, denote by $\str(s)$ the union of $n$-simplexes containing $s$. Each vertex $x$ of $\Ko$ is contained in a product neighborhood $V_x\times S_x$ where $V_x$ is linearly homeomorhpic to $\supp(x)$, and $S_x$ is linearly homeomorhpic to $\str^{\perp}(\supp(x))$, where $\str^\perp(s)$ is the set of points in the interior $\str(s)$ whose geodesics (in the polyhedral metric) to $s$ meet $s$ orthogonally (see Fig. \ref{fig: prod-cover}). A triangulation $K$ has a ``symmetric product cover" $\mathcal{U}$, that is, a cover by open product neighborhoods that is symmetric under the symmetries of $\str(v)$ for any $v$. The fineness of a cover $\mathcal{U}$ is the maximal radius of $S_x$ in $\mathcal{U}$. 

Notice that, topologically, if the support of $x$ is of dimension $k$ then $V_x\times S_x$ is homeomorhpic to $D^k\times D^{n-k}$. The main result of \cite{Lange_2017} is that, for manifolds of dimension $n\le4$, and $\U$ with small enough fineness $\varepsilon$, we can choose a smoothing of $M$ such that the product covers map smoothly to $D^k\times D^{n-k}$ and the smoothing respects all symmetries of $\str(v)$. Using this setup, we can use a smooth partition of unity to construct a natural metric on 4-manifolds as follows: take $F:\R^+\to\R^+$ to be a smooth function for which $F(0)=0$ and, for some small $\varepsilon$, $F(x)=1$ for any $x\ge\epsilon$. For any $u\in \U$, let $g_u$ be the metric obtained from the smooth mapping of $u$ to $D^k\times D^{4-k}$, where $D^k$ is the $k$-ball in Euclidean space. For any $x\in u$ let $d_u(x)$ be the distance of $x$ to the boundary of $u$. Similarly, let $d_0(x)$ be the distance of $x$ to the boundary of the $4$-simplex it is contained in (and $d_0(x)=0$ if $x$ is on the boundary between simplexes). We can, therefore, define a positive-definite smooth metric $g$ as 
\begin{equation}
  g(x)=g_0(x)F(d_0(x))+\sum_{u\in\U,x\in u} g_u(x)F(d_u(x)).
  \label{eq: smooth metric}
\end{equation}

The resulting metric is symmetric under all symmetries of the triangulation. We also need to verify that $p_1(x)=0$ for $x$ outside small neighborhoods of the vertices. If $d_0(x)\ge\epsilon$, we have $g=g_0$ and $p_1=0$ because the metric is flat. For $x$ inside a product cover $V_x\times S_x$, parametrize the product neighborhood as $(v,t)\in \R^k\times \R^{4-k}$, then far enough away from the vertices we have $\partial_vg_u(x)=0$ and also $\partial_v(d_0(x))=0$, as a result, we have $\partial_vg(x)=0$. Since $p_1$ is a top-dimensional form obtained from derivatives of $g$, we have $p_1=0$ and we conclude that $g$ is natural.

\begin{figure}
  \begin{center}
    \begin{tikzpicture}
      \def\r{1.7}
      \def\rr{.5}
      \def\d{.1}
      \def\w{.2}
      \foreach \q in {0,60,...,300}{
        \draw[thick] (0,0) -- (\q:\r);
        \draw[thick] (\q:\r) -- (\q+60:\r);
        \begin{scope}[rotate=\q]
          \fill[red,opacity=.4] (\w,-\d)  -- (\r-\w,-\d) --++ (0,2*\d) --++ (-\r+2*\w,0) --cycle;
        \end{scope}
      }
      \fill[blue,opacity=.5] (0:\rr) -- (60:\rr) -- (120:\rr) -- (180:\rr) -- (240:\rr) -- (300:\rr) -- cycle;
    \end{tikzpicture}
  \end{center}
  \caption{A symmetric product cover of a two-dimensional triangulation. The blue region is the neighborhood of the form $D^2$ around the vertex, the red regions are of the form $D^1\times D^1$. \label{fig: prod-cover}}
\end{figure}

\printbibliography

@article{dong_topological_2008,
  title = {Topological {Entanglement} {Entropy} in {Chern}-{Simons} {Theories}
           and {Quantum} {Hall} {Fluids}},
  volume = {2008},
  issn = {1029-8479},
  url = {http://arxiv.org/abs/0802.3231},
  doi = {10.1088/1126-6708/2008/05/016},
  number = {05},
  urldate = {2024-07-31},
  journal = {J. High Energy Phys.},
  author = {Dong, Shiying and Fradkin, Eduardo and Leigh, Robert G. and Nowling,
            Sean},
  month = may,
  year = {2008},
  note = {arXiv:0802.3231 [cond-mat, physics:hep-th]},
  pages = {016--016},
}

@article{kim_chiral_2022,
  title = {Chiral central charge from a single bulk wave function},
  volume = {128},
  issn = {0031-9007, 1079-7114},
  url = {http://arxiv.org/abs/2110.06932},
  doi = {10.1103/PhysRevLett.128.176402},
  number = {17},
  urldate = {2024-08-03},
  journal = {Phys. Rev. Lett.},
  author = {Kim, Isaac H. and Shi, Bowen and Kato, Kohtaro and Albert, Victor V.
            },
  month = apr,
  year = {2022},
  note = {arXiv:2110.06932 [cond-mat, physics:hep-th, physics:quant-ph]},
  pages = {176402},
}

@article{flammia_topological_2009,
  title = {Topological {Entanglement} {Renyi} {Entropy} and {Reduced} {Density}
           {Matrix} {Structure}},
  volume = {103},
  issn = {0031-9007, 1079-7114},
  url = {http://arxiv.org/abs/0909.3305},
  doi = {10.1103/PhysRevLett.103.261601},
  number = {26},
  urldate = {2024-09-12},
  journal = {Phys. Rev. Lett.},
  author = {Flammia, Steven T. and Hamma, Alioscia and Hughes, Taylor L. and Wen
            , Xiao-Gang},
  month = dec,
  year = {2009},
  note = {arXiv:0909.3305 [cond-mat, physics:quant-ph]},
  pages = {261601},
}

@article{penington_fun_2023,
  title = {Fun with replicas: tripartitions in tensor networks and gravity},
  volume = {2023},
  issn = {1029-8479},
  shorttitle = {Fun with replicas},
  url = {http://arxiv.org/abs/2211.16045},
  doi = {10.1007/JHEP05(2023)008},
  number = {5},
  urldate = {2024-10-14},
  journal = {J. High Energ. Phys.},
  author = {Penington, Geoff and Walter, Michael and Witteveen, Freek},
  month = may,
  year = {2023},
  note = {arXiv:2211.16045 [hep-th, physics:quant-ph]},
  pages = {8},
}

@misc{liu_multi_2024,
  title = {Multiwavefunction overlap and multientropy for topological ground states in ($2+1$) dimensions},
  author = {Liu, Bowei and Zhang, Junjia and Ohyama, Shuhei and Kusuki, Yuya and Ryu, Shinsei},
  journal = {Phys. Rev. B},
  volume = {112},
  issue = {12},
  pages = {125160},
  numpages = {16},
  year = {2025},
  month = {Sep},
  publisher = {American Physical Society},
  doi = {10.1103/tjcf-yryh},
  url = {https://link.aps.org/doi/10.1103/tjcf-yryh}
}

@book{rolfsen2003knots,
  title = {Knots and links},
  author = {Rolfsen, Dale},
  number = {346},
  year = {2003},
  publisher = {American Mathematical Soc.},
}

@article{witten1989quantum,
  title = {Quantum field theory and the Jones polynomial},
  author = {Witten, Edward},
  journal = {Communications in Mathematical Physics},
  volume = {121},
  number = {3},
  pages = {351--399},
  year = {1989},
  publisher = {Springer},
}

@article{kitaev_2006_topological,
  title = {Topological Entanglement Entropy},
  volume = {96},
  ISSN = {0031-9007, 1079-7114},
  DOI = {10.1103/PhysRevLett.96.110404},
  number = {11},
  journal = {Phys. Rev. Lett.},
  author = {Kitaev, Alexei and Preskill, John},
  year = {2006},
  month = mar,
  pages = {110404},
}

@article{Levin_Wen_2006,
  title = {Detecting topological order in a ground state wave function},
  volume = {96},
  ISSN = {0031-9007, 1079-7114},
  DOI = {10.1103/PhysRevLett.96.110405},
  abstractNote = {A large class of topological orders can be understood and
                  classified using the string-net condensation picture. These
                  topological orders can be characterized by a set of data (N,
                  d_i, F^{ijk}_{lmn}, delta_{ijk}). We describe a way to detect
                  this kind of topological order using only the ground state wave
                  function. The method involves computing a quantity called the
                  ``topological entropy’’ which directly measures the quantum
                  dimension D = sum_i d^2_i.},
  note = {arXiv:cond-mat/0510613},
  number = {11},
  journal = {Phys. Rev. Lett.},
  author = {Levin, Michael and Wen, Xiao-Gang},
  year = {2006},
  month = mar,
  pages = {110405},
}

@article{Shi_Kato_Kim_2020,
  title = {Fusion rules from entanglement},
  volume = {418},
  ISSN = {00034916},
  DOI = {10.1016/j.aop.2020.168164},
  note = {arXiv:1906.09376 [cond-mat, physics:quant-ph]},
  journal = {Annals of Physics},
  author = {Shi, Bowen and Kato, Kohtaro and Kim, Isaac H.},
  year = {2020},
  month = jul,
  pages = {168164},
}

@article{Liu_2024,
  title = {Anyon quantum dimensions from an arbitrary ground state wave function
           },
  volume = {15},
  ISSN = {2041-1723},
  DOI = {10.1038/s41467-024-47856-7},
  number = {1},
  journal = {Nat Commun},
  author = {Liu, Shang},
  year = {2024},
  month = jun,
  pages = {5134},
  language = {en},
}

@article{Gadde_Krishna_Sharma_2022,
  title = {A new multi-partite entanglement measure and its holographic dual},
  volume = {106},
  ISSN = {2470-0010, 2470-0029},
  DOI = {10.1103/PhysRevD.106.126001},
  abstractNote = {In this letter we define a natural generalization of the von
                  Neumann entropy to multiple parties that is symmetric with
                  respect to all the parties. We call this measure multi-entropy.
                  We show that for conformal field theories with holographic
                  duals, the multi-entropy is computed by the area of an
                  appropriate “soap-film” anchored on the boundary. We conjecture
                  the quantum version of this prescription that takes into
                  account the sub-leading corrections in G_N.},
  note = {arXiv:2206.09723 [hep-th]},
  number = {12},
  journal = {Phys. Rev. D},
  author = {Gadde, Abhijit and Krishna, Vineeth and Sharma, Trakshu},
  year = {2022},
  month = dec,
  pages = {126001},
  language = {en},
}

@article{williamson2016spurious,
  title = {Spurious Topological Entanglement Entropy from Subsystem Symmetries},
  author = {Williamson, Dominic J. and Dua, Arpit and Cheng, Meng},
  journal = {Phys. Rev. Lett.},
  volume = {122},
  issue = {14},
  pages = {140506},
  numpages = {6},
  year = {2019},
  month = {Apr},
  publisher = {American Physical Society},
  doi = {10.1103/PhysRevLett.122.140506},
  url = {https://link.aps.org/doi/10.1103/PhysRevLett.122.140506},
}

@article{Zou_Haah_2016,
  title = {Spurious long-range entanglement and replica correlation length},
  volume = {94},
  DOI = {10.1103/PhysRevB.94.075151},
  number = {7},
  journal = {Phys. Rev. B},
  publisher = {American Physical Society},
  author = {Zou, Liujun and Haah, Jeongwan},
  year = {2016},
  month = aug,
  pages = {075151},
}

@article{Gass_Levin_2024,
  title = {Many-body systems with spurious modular commutators},
  url = {http://arxiv.org/abs/2405.15892},
  DOI = {10.48550/arXiv.2405.15892},
  note = {arXiv:2405.15892 [quant-ph]},
  number = {arXiv:2405.15892},
  publisher = {arXiv},
  author = {Gass, Julian and Levin, Michael},
  journal = "",
  year = {2024},
  month = may,
}

@article{Mignard_Schauenburg_2021,
  journal = "",
  title = {Modular categories are not determined by their modular data},
  url = {http://arxiv.org/abs/1708.02796},
  DOI = {10.48550/arXiv.1708.02796},
  note = {arXiv:1708.02796 [math]},
  number = {arXiv:1708.02796},
  publisher = {arXiv},
  author = {Mignard, Michaël and Schauenburg, Peter},
  year = {2021},
  month = jun,
}

@article{Bonderson_2018_beyond,
  journal = "",
  title = {On invariants of Modular categories beyond modular data},
  url = {http://arxiv.org/abs/1805.05736},
  DOI = {10.48550/arXiv.1805.05736},
  note = {arXiv:1805.05736 [math]},
  number = {arXiv:1805.05736},
  publisher = {arXiv},
  journal = "",
  author = {Bonderson, Parsa and Delaney, Colleen and Galindo, César and Rowell,
            Eric C. and Tran, Alan and Wang, Zhenghan},
  year = {2018},
  month = jun,
}

@article{sheffer2025extracting,
  title={Extracting topological spins from bulk multipartite entanglement},
  author={Sheffer, Yarden and Stern, Ady and Berg, Erez},
  journal={Physical Review Letters},
  volume={135},
  number={8},
  pages={086601},
  year={2025},
  doi = {10.1103/h53w-64dy},
  url = {https://link.aps.org/doi/10.1103/h53w-64dy},
  publisher={APS}
}

@article{kitaev2006anyons,
  title = {Anyons in an exactly solved model and beyond},
  author = {Kitaev, Alexei},
  journal = {Annals of Physics},
  volume = {321},
  number = {1},
  pages = {2--111},
  year = {2006},
  publisher = {Elsevier},
}

@article{siva_universal_2022,
  title = {A universal tripartite entanglement signature of ungappable edge
           states},
  volume = {106},
  issn = {2469-9950, 2469-9969},
  url = {http://arxiv.org/abs/2110.11965},
  doi = {10.1103/PhysRevB.106.L041107},
  language = {en},
  number = {4},
  urldate = {2024-08-14},
  journal = {Physical Review B},
  author = {Siva, Karthik and Zou, Yijian and Soejima, Tomohiro and Mong, Roger
            S. K. and Zaletel, Michael P.},
  month = jul,
  year = {2022},
  note = {arXiv:2110.11965 [quant-ph]},
  pages = {L041107},
}

@article{kim2022modular,
  title = {Modular commutator in gapped quantum many-body systems},
  volume = {106},
  ISSN = {2469-9950, 2469-9969},
  DOI = {10.1103/PhysRevB.106.075147},
  number = {7},
  journal = {Phys. Rev. B},
  author = {Kim, Isaac H. and Shi, Bowen and Kato, Kohtaro and Albert, Victor V.
            },
  year = {2022},
  month = aug,
  pages = {075147},
}

@article{gass2025r,
  title = {R$\backslash$'enyi-like entanglement probe of the chiral central
           charge},
  author = {Gass, Julian and Levin, Michael},
  journal = {arXiv preprint arXiv:2512.20608},
  year = {2025},
}

@article{del2026multipartite,
  title = {From Multipartite Entanglement to TQFT},
  author = {Del Zotto, Michele and Gadde, Abhijit and Putrov, Pavel},
  journal = {arXiv preprint arXiv:2602.16770},
  year = {2026},
}

@article{Burnell_Chen_Fidkowski_Vishwanath_2014,
  title = {Exactly Soluble Model of a 3D Symmetry Protected Topological Phase of
           Bosons with Surface Topological Order},
  volume = {90},
  ISSN = {1098-0121, 1550-235X},
  url = {http://arxiv.org/abs/1302.7072},
  DOI = {10.1103/PhysRevB.90.245122},
  note = {arXiv:1302.7072 [cond-mat]},
  number = {24},
  journal = {Phys. Rev. B},
  author = {Burnell, F. J. and Chen, Xie and Fidkowski, Lukasz and Vishwanath,
            Ashvin},
  year = {2014},
  month = dec,
  language = {en},
}

@article{Haah_Fidkowski_Hastings_2023,
  title = {Nontrivial Quantum Cellular Automata in Higher Dimensions},
  volume = {398},
  ISSN = {1432-0916},
  DOI = {10.1007/s00220-022-04528-1},
  number = {1},
  journal = {Commun. Math. Phys.},
  author = {Haah, Jeongwan and Fidkowski, Lukasz and Hastings, Matthew B.},
  year = {2023},
  month = feb,
  pages = {469–540},
  language = {en},
}

@article{Walker_Wang_2012,
  title = {(3+1)-TQFTs and topological insulators},
  volume = {7},
  ISSN = {2095-0470},
  DOI = {10.1007/s11467-011-0194-z},
  number = {2},
  journal = {Front. Phys.},
  author = {Walker, Kevin and Wang, Zhenghan},
  year = {2012},
  month = apr,
  pages = {150–159},
  language = {en},
}

@article{Basak_2016,
  title = {An algorithmic approach to construct crystallizations of 3-manifolds
           from presentations of fundamental groups},
  volume = {126},
  ISSN = {0253-4142, 0973-7685},
  DOI = {10.1007/s12044-016-0302-7},
  number = {4},
  journal = {Proc Math Sci},
  author = {Basak, Biplab},
  year = {2016},
  month = nov,
  pages = {629–653},
  language = {en},
}

@article{Basak_Spreer_2016,
  title = {Simple crystallizations of 4-manifolds},
  volume = {16},
  ISSN = {1615-7168, 1615-715X},
  DOI = {10.1515/advgeom-2015-0043},
  note = {arXiv:1407.0752 [math]},
  number = {1},
  journal = {Advances in Geometry},
  author = {Basak, Biplab and Spreer, Jonathan},
  year = {2016},
  month = jan,
  pages = {111–130},
  language = {en},
}

@article{Delaney_Tran_2018,
  title = {A systematic search of knot and link invariants beyond modular data},
  url = {http://arxiv.org/abs/1806.02843},
  DOI = {10.48550/arXiv.1806.02843},
  note = {arXiv:1806.02843 [math]},
  number = {arXiv:1806.02843},
  journal = "",
  publisher = {arXiv},
  author = {Delaney, Colleen and Tran, Alan},
  year = {2018},
  month = jun,
}

@article{Lange_2017,
  title = {Equivariant smoothing of piecewise linear manifolds},
  url = {http://arxiv.org/abs/1507.02395},
  DOI = {10.48550/arXiv.1507.02395},
  note = {arXiv:1507.02395 [math]},
  number = {arXiv:1507.02395},
  author = {Lange, Christian},
  year = {2017},
  month = feb,
  language = {en},
}

@book{Lins_1995,
  address = {Singapore},
  series = {Series on knots and everything},
  title = {Gems, Computers and attractors for 3-manifolds},
  ISBN = {978-981-02-1907-9},
  publisher = {World Scientific},
  author = {Lins, Sóstenes},
  year = {1995},
  collection = {Series on knots and everything},
  language = {en},
}

@article{aschenbrenner20123,
  title = {3-manifold groups},
  author = {Aschenbrenner, Matthias and Friedl, Stefan and Wilton, Henry},
  journal = {arXiv preprint arXiv:1205.0202},
  year = {2012},
}

@misc{regina,
  author = {Benjamin A. Burton and Ryan Budney and William Pettersson and others
            },
  title = {Regina: Software for low-dimensional topology},
  howpublished = {{\tt http://regina-normal.github.io/}},
  year = {1999--2025},
}

@article{Kapustin_2014,
  title = {Symmetry Protected Topological Phases, Anomalies, and Cobordisms:
           Beyond Group Cohomology},
  url = {http://arxiv.org/abs/1403.1467},
  DOI = {10.48550/arXiv.1403.1467},
  note = {arXiv:1403.1467 [cond-mat]},
  number = {arXiv:1403.1467},
  publisher = {arXiv},
  author = {Kapustin, Anton},
  year = {2014},
  month = apr,
  language = {en},
}

@book{Nakahara_1990,
  title = {Geometry, Topology and Physics},
  ISBN = {0-7503-0606-8},
  url = {http://stacks.iop.org/0750306068},
  DOI = {10.1887/0750306068},
  journal = {Text},
  publisher = {IOP Publishing Ltd},
  author = {Nakahara, Mikio},
  year = {1990},
}

@book{milnor1974characteristic,
  title = {Characteristic classes},
  author = {Milnor, John Willard and Stasheff, James D},
  number = {76},
  year = {1974},
  publisher = {Princeton university press},
}

@article{gagliardi1989genus,
  title = {On the genus of the complex projective plane},
  author = {Gagliardi, Carlo},
  journal = {aequationes mathematicae},
  volume = {37},
  number = {2},
  pages = {130--140},
  year = {1989},
  publisher = {Springer},
}

@article{grover2011entanglement,
  title = {Entanglement entropy of gapped phases and topological order in three
           dimensions},
  author = {Grover, Tarun and Turner, Ari M and Vishwanath, Ashvin},
  journal = {Physical Review B—Condensed Matter and Materials Physics},
  volume = {84},
  number = {19},
  pages = {195120},
  year = {2011},
  publisher = {APS},
}

@article{sheffer2025probing,
  title = {Probing chiral topological states with permutation defects},
  author = {Sheffer, Yarden and Fan, Ruihua and Stern, Ady and Berg, Erez and
            Ryu, Shinsei},
  journal = {Phys. Rev. B},
  volume = {113},
  issue = {19},
  pages = {195123},
  numpages = {27},
  year = {2026},
  month = {May},
  publisher = {American Physical Society},
  doi = {10.1103/3bz4-fbxr},
  url = {https://link.aps.org/doi/10.1103/3bz4-fbxr},
}

@article{chen2021exactly,
  title = {Exactly solvable lattice Hamiltonians and gravitational anomalies},
  author = {Chen, Yu-An and Hsin, Po-Shen},
  journal = {arXiv preprint arXiv:2110.14644},
  year = {2021},
}

@MISC {401713,
    TITLE = {Behavior of biggest prime divisor of $n$ as $n$ grows large},
    AUTHOR = {Carl-Fredrik Nyberg Brodda (https://mathoverflow.net/users/120914/carl-fredrik-nyberg-brodda)},
    HOWPUBLISHED = {MathOverflow},
    NOTE = {URL:https://mathoverflow.net/q/401713 (version: 2021-08-14)},
    EPRINT = {https://mathoverflow.net/q/401713},
    URL = {https://mathoverflow.net/q/401713}
}

\end{document}